\newtheorem{theorem}{Theorem}[section]
\newtheorem{lemma}[theorem]{Lemma}
\newtheorem{corollary}[theorem]{Corollary}
\newtheorem{proposition}[theorem]{Proposition}  % <-- Add this line
\newtheorem{assumption}[theorem]{Assumption}  
\title{Log-Gaussian Cox Processes on General Metric Graphs}
\author{
    David Bolin$^{1}$\thanks{Authors in alphabetical order.} \quad
    Damilya Saduakhas$^{1}$ \quad
    Alexandre B. Simas$^{1}$ \\[0.5em]
    \small $^{1}$Statistics Program, King Abdullah University of Science and Technology (KAUST), \\ 
    \small Thuwal 23955, Saudi Arabia \\[0.5em]
    \small \texttt{david.bolin@kaust.edu.sa}, 
    \texttt{damilya.saduakhas@kaust.edu.sa}, 
    \texttt{alexandre.simas@kaust.edu.sa}
}
\date{}
\begin{document}
\maketitle

\begin{abstract}
The modeling of spatial point processes has advanced considerably, yet extending these models to non-Euclidean domains, such as road networks, remains a challenging problem. We propose a novel framework for log-Gaussian Cox processes on general compact metric graphs by leveraging the Gaussian Whittle--Mat\'ern fields, which are solutions to fractional-order stochastic differential equations on metric graphs. To achieve computationally efficient likelihood-based inference, we introduce a numerical approximation of the likelihood that eliminates the need to approximate the Gaussian process. This method, coupled with the exact evaluation of finite-dimensional distributions for Whittle--Mat\'ern  fields with integer smoothness, ensures scalability and theoretical rigour, with derived convergence rates for posterior distributions. The framework is implemented in the open-source MetricGraph R package, which integrates seamlessly with R-INLA to support fully Bayesian inference. We demonstrate the applicability and scalability of this approach through an analysis of road accident data from Al-Ahsa, Saudi Arabia, consisting of over 150,000 road segments. By identifying high-risk road segments using exceedance probabilities and excursion sets, our framework provides localized insights into accident hotspots and offers a powerful tool for modeling spatial point processes directly on complex networks.
\end{abstract}

\vspace{1em}
\noindent \textbf{Keywords:} Gaussian random field; Linear network; Log-Gaussian Cox process; Metric Graph; Spatial point process; Stochastic Partial Differential Equation.

\section{Introduction}\label{sec:intro}

Spatial point processes are indispensable for analyzing event patterns across diverse fields, from epidemiology to urban safety planning \citep{LGCP8}. Among these, log-Gaussian Cox processes \citep{LGCP1} have become a widely adopted framework for modeling spatially varying intensities, leveraging latent Gaussian fields to flexibly incorporate spatial covariates \citep{LGCP9}. Formally, a log-Gaussian Cox process $X$ on a spatial domain $D$ is defined as a Cox process with an intensity function 
\begin{equation}\label{eq:lgcp_space}
    \Lambda(s) = \exp(u(s)), \quad s \in D,
\end{equation}
where $u$ is a Gaussian process. That is, for any Borel set $B\in\mathcal{B}(D)$, the cardinality of the set $X\cap B$,  conditioned on a realization $\lambda$ of $\Lambda$, is a Poisson distributed random variable with intensity $\int_B \lambda(s) ds$. 

Although log-Gaussian Cox processes are well-established in Euclidean domains \citep{LGCP7}, their extension to network-constrained geometries—such as road networks—remains both theoretically and computationally challenging. Traditional models often rely on Euclidean metrics, which misrepresent connectivity and spatial dependencies in networks where events are restricted to linear pathways \citep{GRAPH7,ROAD5}.  
Recent work on linear network point processes has addressed these limitations by adopting network-based distances such as shortest-path or resistance metric, accommodating nonstationarity, and developing tailored summary statistics for inhomogeneous processes \citep{GRAPH12,ROAD8}. A growing body of research has extended Cox and log-Gaussian Cox process models to incorporate the structure of road networks, with applications ranging from traffic accident analysis to urban mobility \citep{LGCP10, LGCP5}. A key advance by \citet{GRAPH7} introduced isotropic covariance functions on linear networks using the resistance metric, replacing Euclidean distances to better align with network connectivity. Building on this work, \citet{LGCP5} developed Cox processes for linear networks based on these isotropic covariance functions. Those models could easily be extended to metric graphs with Euclidean edges, which is a more general class of metric graphs than linear networks; however, having Euclidean edges is still a restriction that often is not satisfied for large graphs such as the one we will later analyze in this work. Further, 
%However, their framework is restricted to graphs with Euclidean edges, where edges are defined as straight segments in \(\mathbb{R}^2\), and imposes a smoothness limit of \(\nu \leq 1/2\). This restriction precludes differentiable Gaussian processes even in simplified network geometries. 
as noted by \citet{ACCIDENT4}, it is not clear that isotropy is a desirable feature as it may fail to capture anisotropic dependencies inherent to real-world road networks, underscoring the need for more flexible models \citep{ROAD8}.

Alternative approaches embed networks into the Euclidean plane \citep{ROAD3}, approximating spatial processes in \(\mathbb{R}^2\). This embedding, however, leads to significant inaccuracies, as such models cannot capture the constraints imposed by network geometry—such as the fact that events occur only along roads and not in surrounding spaces—resulting in uncontrolled boundary effects. 
Further, large-scale applications, such as city-wide accident modeling, result in additional computational bottlenecks. While the spatstat.linnet package \citep{ACCIDENT5} provides tools for exploratory analysis on linear networks, it supports only Poisson processes and lacks inference capabilities for log-Gaussian Cox processes. The coxln package \citep{LGCP5}, designed for network-based Cox processes, implements the log-Gaussian Cox processes based on isotropic covariance functions. This package is thus limited to problems on metric graphs with Euclidean edges. Additionally, based on our tests, coxln does not scale well computationally to large graphs and offers limited integration of spatial covariates, which often are critical for real-world applications.

In this work, in Section~\ref{sec:section2}, we for the first time propose a class of log-Gaussian Cox processes that are well-defined on any compact metric graph. 
These are thus not only well-defined on linear networks and metric graphs with Euclidean edges, but on arbitrary compact metric graphs which can accommodate complex geometries common in applications. The model class is based on the recently introduced Gaussian Whittle--Mat\'ern fields \citep{GRAPH4,GRAPH1} on metric graphs, which can be viewed as an extension of the popular stochastic partial differential equation approach to metric graphs \citep{GRAPH1,SOFTWARE10}. Besides being well-defined on general metric graphs, another advantage of these processes is that they can be made differentiable to allow for smoothly varying intensities, which is so far not possible through the approach based on isotropic fields, even if the graph has Euclidean edges. 
% Crucially, these fields enable differentiable processes for certain smoothness parameters, allowing smoothly varying intensities—a feature unachievable with prior isotropic methods even in simple networks.

Methodologically, in Section~\ref{sec:section3}, we propose a version of the computationally efficient likelihood-based inference algorithm proposed by \citet{LGCP3}, adapted to the metric graph setting, which relies on a quadrature approximation of the integral appearing in the likelihood. However, contrary to \citet{LGCP3}, our approach does not rely on any approximation of the Gaussian process. Besides being more accurate, this also allows us to give a complete theoretical justification of the method by deriving convergence rates for the resulting posterior distributions. Finally, we combine this approximation with the fact that the Whittle--Mat\'ern fields have Markov properties for integer smoothness parameters \citep{GRAPH9} to obtain a theoretically justified and computationally efficient inference method that scales well to applications involving large datasets and networks. 

We demonstrate the utility of our framework through an analysis of traffic accidents in Saudi Arabia’s Al-Ahsa governorate in Section~\ref{sec:application}. Using high-resolution road network data comprising 178,801 segments, we generate intensity maps and identify high-risk regions by estimating excursion sets of the latent Gaussian process in Section~\ref{sec:excursions}. The open-source MetricGraph R package \citep{SOFTWARE3} implements our methodology, providing a scalable interface to the R-INLA software for fitting log-Gaussian Cox processes in a fully Bayesian framework \citep{SOFTWARE10,SOFTWARE12}. The article concludes with a discussion in the final Section~\ref{sec:discussion}. 

\section{Whittle--Mat\'ern Log-Gaussian Cox Processes on Metric Graphs}\label{sec:section2}

%In many practical applications, such as road networks, river systems, and other transport or flow systems, the spatial domain is better represented by metric graphs \(\Gamma\) rather than the conventional continuous Euclidean space. A metric graph consists of edges (line segments) and nodes (vertices), capturing the connectivity and constrained nature of these domains. This structure allows for the representation of spatial point processes in complex geometries where traditional Euclidean models may fail.
%\subsection{Introduction}
%A log-Gaussian Cox process is a flexible model for spatial point patterns, originally introduced by \citet{LGCP1} for data on Euclidean domains. The model is defined as a Poisson process with a stochastic intensity function governed by a Gaussian random field. Specifically, the intensity function \(\lambda(s)\) at location \(s\) is connected to a latent Gaussian random field \(u(s)\) through a log-link function:
%\begin{equation}\label{eq:lgcp_space}
%    \lambda(s) = \exp(u(s)), \quad s \in \mathbb{R}^2.
%\end{equation}

%To adapt this framework for metric graphs, we first introduce the Gaussian Whittle--Mat\'ern fields as the model which will be used for the Gaussian process, and then discuss the properties of the resulting log-Gaussian Cox process.

\subsection{Gaussian Whittle--Mat\'ern Fields}\label{sec:WM_graph}

Consider a compact, undirected metric graph \( \Gamma \), consisting of a finite set of vertices \( \mathcal{V} = \{v_i\} \) and a finite set of edges \( \mathcal{E} = \{e_j\} \), where each edge \( e \) is a curve of finite length \( \ell_e \in (0, \infty) \) connecting two vertices. Each edge \( e \) is defined by a pair of vertices \( (\bar{e}, \underline{e}) = (v_i, v_k) \). For each vertex \( v \in \mathcal{V} \), denote the set of edges incident to \( v \) by \( \mathcal{E}_v \), with degree \( \deg(v) = |\mathcal{E}_v| \). A point \( s \in \Gamma \) is located on an edge \( e \) and is represented as \( s = (e, t) \), where \( t \in [0, \ell_e] \). Because the lengths $\ell_e $ are finite, the graph is compact. We assume the graph is connected, ensuring a path exists between any two vertices, and equip the graph with the geodesic distance $d(\cdot, \cdot)$, which measures the shortest path length between any two points in $\Gamma$. This compact and connected structure facilitates defining statistical models directly on the graph edges, enabling a natural extension of spatial processes to network-constrained domains and going beyond traditional combinatorial graphs. Specifically, a real-valued function \( f \) on \( \Gamma \) is a collection of real-valued functions \( \{f_e\}_{e \in \mathcal{E}} \), where \( f_e \colon [0, \ell_e] \to \mathbb{R} \) for each $e \in \mathcal{E}$. 

A Whittle--Mat\'ern field $u$ is defined as a solution to the following fractional-order differential equation on $\Gamma$:
\begin{equation}\label{eq:WM_graph}
\left(\kappa^2 - \Delta_{\Gamma}\right)^{\alpha / 2} (\tau u) = \mathcal{W} \quad \text{on } \Gamma,
\end{equation}
where $\kappa, \tau > 0$ control the practical correlation range and the marginal variance of $u$, $\alpha = \nu + 1/2 > 0$ determines the smoothness, $\mathcal{W}$ is Gaussian white noise on a probability space $(\Omega, \mathcal{F}, \mathbb{P})$, and $\Delta_{\Gamma}$ is the Kirchhoff--Laplacian. 

Specifically, $\Delta_{\Gamma}$ acts as the second derivative on each edge, with domain 
$\mathcal{D}(\Delta_{\Gamma}) = \widetilde{H}^2(\Gamma) \cap C(\Gamma) \cap K(\Gamma)$. Here, $\widetilde{H}^k(\Gamma) = \bigoplus_{e \in \mathcal{E}} H^k(e)$ is the space of all functions which are $k$ times weakly differentiable on the edges, where $H^k(e)$ denotes the standard Sobolev space of order $k$ on the edge $e$.  Furthermore, the space $C(\Gamma) = \left\{ f \in L_2(\Gamma) : f \text{ is continuous} \right\}$ is the class of continuous functions on $\Gamma$, and \( L_2(\Gamma) = \bigoplus_{e \in \mathcal{E}} L_2(e) \) is the space of square-integrable functions on $\Gamma$, equipped with the norm $\|f\|_{L_2(\Gamma)}^2 = \sum_{e \in \mathcal{E}} \|f_e\|_{L_2(e)}^2$.
Finally, 
\begin{equation}\label{eq:Kirchoff}
K(\Gamma) = \left\{ f \in \widetilde{H}^2(\Gamma)\cap C(\Gamma) : \sum_{e \in \mathcal{E}_v} \partial_e f(v) = 0 \right\},
\end{equation}
is the space of functions satisfying the so-called Kirchhoff vertex conditions, which are continuous functions satisfying the vertex condition $\sum_{e \in \mathcal{E}_v} \partial_e f(v) = 0$, where $\partial_e$ denotes the directional derivative away from the vertex. This imposes a sum-to-zero constraint on the derivatives at the vertices, meaning that the derivative is continuous at vertices of degree 2.

The fractional power in \eqref{eq:WM_graph} is then defined in the spectral sense, and the Gaussian white noise $\mathcal{W}$ on $L_2(\Gamma)$ can be represented through a spectral decomposition involving the eigenfunctions $\{\varphi_i\}_{i \in \mathbb{N}}$ of  $\Delta_{\Gamma}$:
\begin{equation}
\mathcal{W} = \sum_{i \in \mathbb{N}} \xi_i \varphi_i, \quad \xi_i \stackrel{\text{iid}}{\sim} N(0, 1) \text{ on } (\Omega, \mathcal{F}, \mathbb{P}).
\end{equation}
One can show that, for $\alpha > 1/2$, \eqref{eq:WM_graph} has a unique solution in this setting \citep{GRAPH1}, which is a centered Gaussian process $u \in L_2(\Gamma)$ ($\mathbb{P}$-almost surely) and is referred to as a Whittle--Mat\'ern field. See \citet{GRAPH1} for details. 

One notable feature of the Whittle--Mat\'ern fields is that the parameter $\alpha$ directly controls the sample path regularity of the process. Specifically, if $\alpha>1/2$, there exists a modification of $u$ with $\gamma$-Hölder continuous sample paths for any $0 < \gamma < \min\{\alpha - 1/2, 1/2\}$. Moreover, if $\alpha > 3/2$, then $u\in H^1(\Gamma) = \widetilde{H}^1(\Gamma)\cap C(\Gamma)$ $\mathbb{P}$-almost surely \citep{GRAPH1}. 

\subsection{Log-Gaussian Cox processes based on Whittle--Mat\'ern fields}
A log-Gaussian Cox process on a metric graph $\Gamma$ is defined as a Cox process on $\Gamma$ with a random intensity function $\Lambda(s) = \exp(x(s))$ for $s\in\Gamma$, where $x$ is a Gaussian process on $\Gamma$. %That is, for any Borel set $B\in \mathcal{B}(\Gamma)$, the cardinality of the set $X\cap B$ conditionally on a realization $\lambda$ of $\Lambda$ is a Poisson distributed random variable with intensity $\int_B \lambda(s) ds$. 
We introduce the Whittle--Mat\'ern log-Gaussian Cox process as a special case where $x = m + u$, for some deterministic function $m$ and a centered Gaussian Whittle--Mat\'ern field $u$. As an immediate consequence of the well-posedness of \eqref{eq:WM_graph}, this defines a well-defined log-Gaussian Cox process.

\begin{proposition}
Let \(u=\{u(s): s \in \Gamma\}\) be a Gaussian Whittle-Matérn field with $\alpha > 1/2$ on a compact metric graph $\Gamma$ and $m\in L_2(\Gamma)$. Then $\Lambda = \exp(m + u)$ is a valid stochastic intensity function for a log-Gaussian Cox process on $\Gamma$.     
\end{proposition}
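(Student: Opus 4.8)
The plan is to read \emph{valid stochastic intensity function} in the sense dictated by the Cox construction recalled in Section~\ref{sec:intro}: $\Lambda$ must be a $\mathbb{P}$-almost surely non-negative random field for which the set function $B \mapsto \int_B \Lambda(s)\,ds$ is, almost surely, a well-defined Borel measure on $\Gamma$ depending measurably on $\omega \in \Omega$, so that conditionally on a realization it serves as the mean measure directing a Poisson process. Accordingly I would establish three things: (i) $\Lambda \geq 0$ pointwise; (ii) the field $x = m+u$ admits a version that is jointly measurable in $(\omega,s)$; and (iii) the induced set function is then almost surely a measure depending measurably on $\omega$. Point (i) is immediate since $\Lambda = \exp(x) > 0$ everywhere, so the entire content reduces to the measurability assertions (ii) and (iii)—which is exactly what the phrase ``immediate consequence of the well-posedness'' anticipates.

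For (ii), I would invoke the well-posedness of \eqref{eq:WM_graph} for $\alpha > 1/2$: it yields a centered Gaussian process $u$ with $u \in L_2(\Gamma)$ $\mathbb{P}$-almost surely, and the sample-path regularity recorded in Section~\ref{sec:WM_graph} (Hölder continuity of a modification whenever $\alpha > 1/2$) furnishes a modification of $u$ whose paths are continuous on each edge, hence Borel measurable in $s$; joint measurability in $(\omega,s)$ then follows from edgewise continuity in $s$ together with measurability in $\omega$ of the finite-dimensional evaluations. Since $m \in L_2(\Gamma)$ has a fixed Borel representative on $\Gamma$, the sum $x = m+u$ is a jointly measurable random field and $\Lambda = \exp(x)$ is a strictly positive, jointly measurable random field.

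For (iii), I would fix $\omega$ and note that, because $\Lambda(\cdot,\omega)$ is non-negative and Borel measurable on $\Gamma$, the set function $N(B) = \int_B \Lambda(s,\omega)\,ds$ is automatically a Borel measure on $\Gamma$ taking values in $[0,\infty]$, with countable additivity being merely the monotone convergence theorem applied to the non-negative integrand. Crucially, this requires \emph{no} integrability of $\Lambda$ beyond its being a non-negative measurable function, so that the bare $L_2(\Gamma)$ membership of $m$—rather than any boundedness—already suffices. Measurability of $\omega \mapsto N(B)$ for each fixed $B$ follows from Tonelli's theorem applied to the jointly measurable integrand, so $N$ is a bona fide random measure, namely the directing measure of the Cox process, and combining (i)--(iii) shows $\Lambda$ is a valid stochastic intensity.

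The only genuinely non-routine step is (ii): one must produce a version of $u$ that is measurable jointly in $(\omega,s)$ so that the pathwise integrals $\int_B \Lambda\,ds$ are meaningful, and this is precisely where the $\alpha > 1/2$ hypothesis enters, through the continuity of sample paths. I would emphasise in the write-up that the argument does not—and under $m \in L_2(\Gamma)$ alone cannot—assert almost-sure finiteness of the total mass $N(\Gamma)$; it establishes only that $N$ is a well-defined random measure with $N(\Gamma)$ a $[0,\infty]$-valued random variable, which is exactly the level of validity the proposition states, as distinct from the strictly stronger property of almost-sure local finiteness.
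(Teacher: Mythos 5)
Your items (i)--(iii) are sound and flesh out exactly the argument the paper leaves implicit (the paper offers no proof beyond the remark that the claim is immediate from well-posedness): the H\"older-continuous modification available for $\alpha>1/2$ gives a version of $u$ that is jointly measurable, and Tonelli's theorem makes $B\mapsto\int_B\Lambda(s)\,ds$ a measurable family of Borel measures. The genuine gap is your final paragraph, where you declare that almost-sure finiteness of the total mass $N(\Gamma)$ is not part of what must be proved. It is. The paper's own definition of a Cox process in Section~\ref{sec:intro} requires that, conditionally on a realization $\lambda$, the count $|X\cap B|$ be a Poisson distributed random variable with mean $\int_B\lambda(s)\,ds$; if $N(\Gamma)=\infty$ on an event of positive probability, then on that event $|X\cap\Gamma|$ would be infinite almost surely, which is not a Poisson random variable, and the resulting point configuration would fail to be locally finite---and on a \emph{compact} graph, local finiteness is precisely finiteness of $N(\Gamma)$. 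The same requirement (almost-sure local integrability of $\Lambda$) is built into the notion of a driving random measure in the reference the paper invokes for the intensity and pair-correlation formulas in Section~\ref{sec:section2}. Establishing it is also where $\alpha>1/2$ does real work beyond measurability: the continuous modification is almost surely bounded on the compact graph $\Gamma$, whence $N(\Gamma)\le \exp\{\sup_{s\in\Gamma}u(s)\}\int_\Gamma e^{m(s)}\,ds$, and alternatively $\mathbb{E}\,N(\Gamma)=\int_\Gamma \exp\{m(s)+c_{\kappa,\tau,\alpha}(s,s)/2\}\,ds$ with $c_{\kappa,\tau,\alpha}(s,s)$ bounded on $\Gamma$.

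You are right about one thing, and it deserves to be stated openly rather than absorbed into a redefinition of ``valid'': $m\in L_2(\Gamma)$ alone does not make $e^m$ integrable. For instance, $m(t)=\log(1/t)$ on an edge $[0,\ell_e]$ satisfies $m\in L_2(e)$ while $\int_0^{\ell_e} e^{m(t)}\,dt=\infty$, so under the literal hypothesis the bound above can be infinite with probability one. The correct resolution is not to weaken the meaning of validity---that contradicts the definition the paper itself uses---but to observe that the proposition should be read with the implicit requirement $e^m\in L_1(\Gamma)$, which holds automatically for the bounded mean functions used throughout the paper (e.g., $m=\beta_0+\sum_i\beta_i X_i$ with bounded covariates, or $m=0$ as assumed in the theoretical analysis), and under which the displayed bound closes the argument. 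As written, your proof shows only that $\Lambda$ is a nonnegative jointly measurable random field inducing a $[0,\infty]$-valued random measure; it does not show that this measure directs a well-defined log-Gaussian Cox process, which is what the proposition asserts.
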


The following proposition is easily verified along similar lines as Proposition 5.4 in \citet{LGCP7} using the expression for the Laplace transform of a normally distributed random variable.

\begin{proposition} 
A Whittle--Mat\'ern log-Gaussian Cox process  \(X\) has an intensity function and a pair correlation function given by
\begin{equation}
\rho(s)=\exp(m(s)+c_{\kappa,\tau,\alpha}(s, s)/2), \quad g(s, t)=\exp(c_{\kappa,\tau,\alpha}(s, t))
\end{equation}
where \(c_{\kappa,\tau,\alpha}\) is the covariance function of the underlying Whittle--Mat\'ern field with parameters $\kappa,\tau >0$ and $\alpha > 1/2$. 
\end{proposition}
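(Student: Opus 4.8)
The plan is to compute the first- and second-order product densities of $X$ directly from their defining property for Cox processes, and then read off $\rho$ and $g$. The essential fact I would use is that, conditional on a realization $\lambda$ of the random intensity $\Lambda$, the process $X$ is an inhomogeneous Poisson process, whose $n$-th order product density is $\prod_{i=1}^n \lambda(s_i)$. Averaging over the law of $\Lambda$ then yields, for a Cox process, $\rho^{(n)}(s_1,\dots,s_n) = \mathbb{E}\big[\prod_{i=1}^n \Lambda(s_i)\big]$. In particular the intensity is $\rho(s) = \mathbb{E}[\Lambda(s)]$ and the second-order product density is $\rho^{(2)}(s,t) = \mathbb{E}[\Lambda(s)\Lambda(t)]$, with the pair correlation function defined by $g(s,t) = \rho^{(2)}(s,t)/(\rho(s)\rho(t))$. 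Throughout I abbreviate $c = c_{\kappa,\tau,\alpha}$.

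For the intensity, I write $x(s) = m(s) + u(s)$, so that $x(s)$ is Gaussian with mean $m(s)$ and variance $c(s,s)$, since $u$ is centered with covariance $c$. Recalling that the Laplace transform of $Z \sim N(\mu,\sigma^2)$ is $\mathbb{E}[e^{Z}] = \exp(\mu + \sigma^2/2)$, I obtain immediately
\begin{equation*}
\rho(s) = \mathbb{E}[\exp(m(s)+u(s))] = \exp\big(m(s) + c(s,s)/2\big).
\end{equation*}
For the pair correlation function, I would apply the same identity to the univariate Gaussian variable $x(s)+x(t)$, which has mean $m(s)+m(t)$ and variance $c(s,s)+c(t,t)+2c(s,t)$ because $(u(s),u(t))$ is jointly Gaussian with covariance $c$. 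This gives
\begin{equation*}
\rho^{(2)}(s,t) = \mathbb{E}[\exp(x(s)+x(t))] = \exp\Big(m(s)+m(t)+\tfrac{1}{2}\big(c(s,s)+c(t,t)\big)+c(s,t)\Big).
\end{equation*}
Dividing by $\rho(s)\rho(t) = \exp\big(m(s)+m(t)+\tfrac{1}{2}(c(s,s)+c(t,t))\big)$ cancels every term except $c(s,t)$, so $g(s,t) = \exp(c(s,t))$, as claimed.

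Since the computation is essentially a two-line application of the Gaussian moment generating function, the only points requiring care are the justifications rather than the algebra. First, one must ensure the moments are finite and the product-density representation is valid: because $\Gamma$ is compact and $\alpha > 1/2$, the field $u$ lies in $L_2(\Gamma)$ with a continuous, hence bounded, covariance $c$, so $c(s,s)$ is bounded and, together with $m \in L_2(\Gamma)$, the relevant exponential expectations are finite. Second, the interchange of expectation with the Campbell-type moment computation is justified by Tonelli's theorem, since the integrand $\Lambda$ is nonnegative. I therefore expect no genuine obstacle: the main step is simply recognizing that both $x(s)$ and $x(s)+x(t)$ are univariate Gaussian and applying the Laplace transform, exactly as in the Euclidean case.
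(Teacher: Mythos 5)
Your proof is correct and is precisely the argument the paper has in mind: the paper does not write out the details, but states that the result follows ``along similar lines as Proposition 5.4 in \citet{LGCP7} using the expression for the Laplace transform of a normally distributed random variable,'' which is exactly your computation of $\rho(s)=\mathbb{E}[\Lambda(s)]$ and $\rho^{(2)}(s,t)=\mathbb{E}[\Lambda(s)\Lambda(t)]$ via the Gaussian moment generating function applied to $x(s)$ and $x(s)+x(t)$. Your added remarks on finiteness (boundedness of $c$ on the compact graph) and the Campbell/Tonelli justification are sound and only make explicit what the paper leaves implicit.
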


Notably, the Whittle--Mat\'ern fields are in general not isotropic, which means that the corresponding log-Gaussian Cox processes are not isotropic either, even if $m = 0$. Specifically, the variance function $v_{\kappa,\tau,\alpha}(s) = c_{\kappa,\tau,\alpha}(s,s)$ of a Whittle--Mat\'ern field is not constant, where $v(s)$ is lower near vertices with degree greater than 2 and higher near vertices of degree 1. An illustration of the intensity function and the pair correlation function for a Whittle--Mat\'ern field can be seen in Figure~\ref{fig:intensity}.

The non-isotropy of the field is often realistic for applications on metric graphs \citep{ACCIDENT7}. However, for the modeling of traffic accidents, the specific form of non-isotropy may not be desirable since this implies that the intensity is slightly lower close to intersections. We can address this issue by replacing the Whittle--Mat\'ern field $u$ by a variance-stationary Whittle--Mat\'ern field $\hat{u}$ \citep{GRAPH11}, defined through the equation
$$
(\kappa^2 - \Delta_{\Gamma})^{\alpha/2} (\sigma_\kappa \hat{u}) = \sigma \mathcal{W},
$$
where $\sigma_\kappa$ represents the marginal standard deviations of a Whittle--Mat\'ern field with $\tau = 1$ and $\sigma>0$ is a parameter that is the standard-deviation of $\hat{u}$. Thus, using a variance-stationary Whittle--Mat\'ern field, we obtain a corresponding log-Gaussian Cox process with intensity function $\rho(s) = \exp(m(s) + \sigma^2/2)$ and pair correlation function
$g(s, t)=\exp(\sigma^2 c_{\kappa,1,\alpha}(s, t)/(\sigma_{\kappa}(s)\sigma_{\kappa}(t)))$.
This implies that setting $m = 0$  results in a log-Gaussian Cox process with a constant intensity function. %An example of a pair correlation function is shown in Figure~\ref{fig:intensity}.

\begin{figure}
    \centering
    \includegraphics[width=0.49\linewidth]{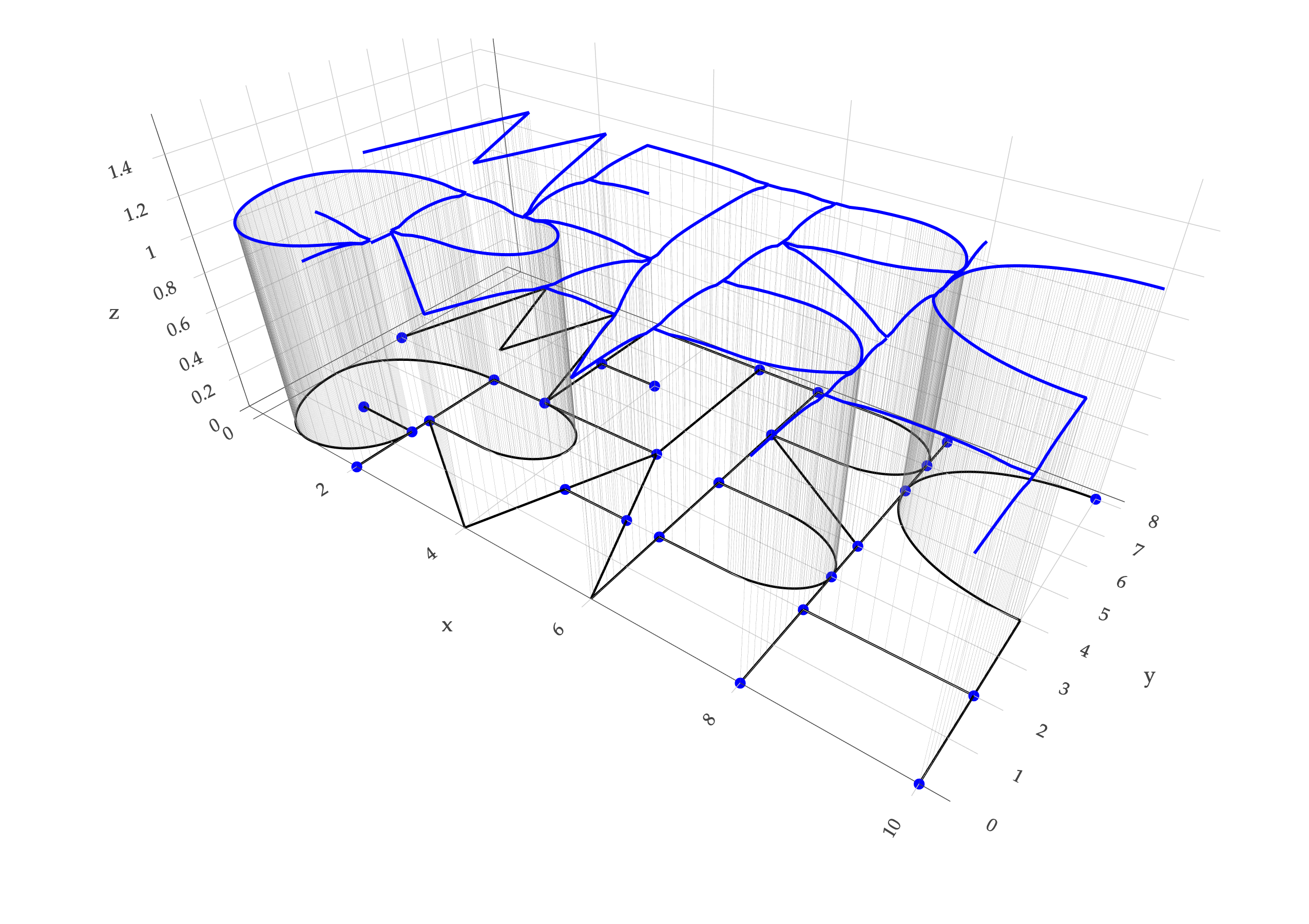}
    \includegraphics[width=0.49\linewidth]{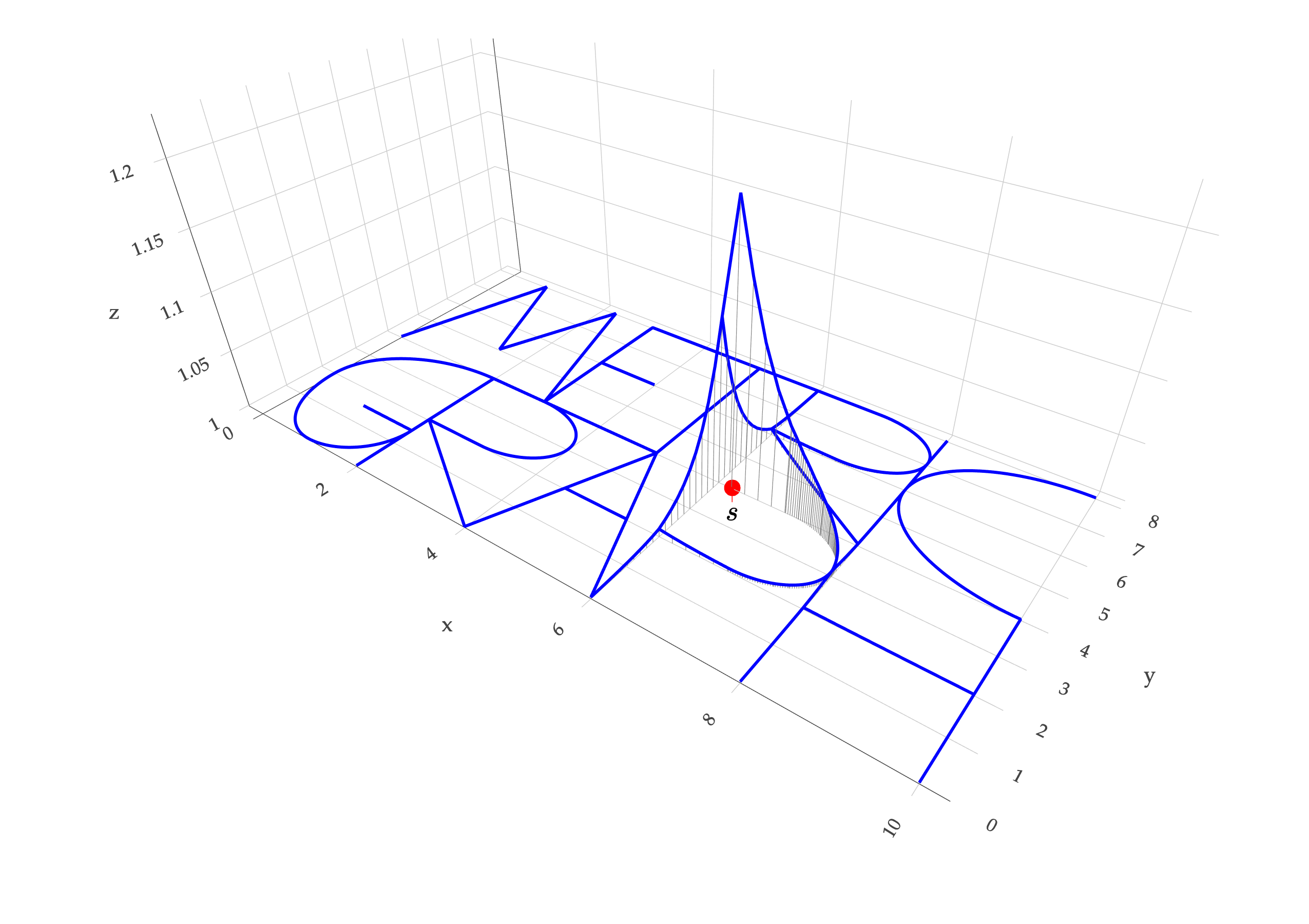} 
    \caption{Intensity function $\rho$ (left) and pair correlation function $g(s,\cdot)$ (right) for a Whittle--Mat\'ern log-Gaussian Cox process with $\alpha = 1$, $\kappa = 2$, $\tau = 1$. The red point in the right panel indicates the location $s$.}
    \label{fig:intensity}
\end{figure}

\section{Statistical Inference}\label{sec:section3}
Likelihood-based inference for log-Gaussian Cox processes is challenging as the likelihood of the data conditionally on the intensity function is 
\begin{equation}\label{eq:lgcp_likelihood}
    \pi(Y \mid \lambda)=\exp \left\{|\Gamma|-\int_{\Gamma} \lambda(s) \mathrm{d} s\right\} \prod_{s_i \in Y} \lambda\left(s_i\right).
\end{equation}
The first term represents the integral of the intensity function over the graph, which accounts for the expected number of points in the domain, while the second term evaluates the intensity at the observed point locations. Due to the complexity of the intensity function, this integral is typically intractable and must be approximated numerically.
One common way to approximate the likelihood for Euclidean domains is to partition the domain into a grid and transform the likelihood into a standard Poisson regression likelihood \citep{Illian2012}. Another approach suggested by \citet{LGCP3} for SPDE-based models is to combine a finite element approximation of the Gaussian process with a numerical approximation of the integral.

In the metric graph setting, we propose a third option, which is to approximate the integral using a mid-point rule, without approximating the field itself. The reason for this is that, as we will see below, we can then provide theoretical guarantees for the full approximation of the corresponding posterior distribution of the latent Gaussian process. 
This idea results in an approximation of the log-likelihood of the form 
\begin{equation}
    \log \pi(Y \mid \lambda)=|\Gamma|- \sum_{i=1}^p \widetilde{a}_i \lambda(\widetilde{s}_i) +  \sum_{i=1}^N \log \lambda(s_i),
\end{equation}
where \( \widetilde{s}_i \) are the midpoints of a mesh over the graph  and \( \widetilde{a} = (\widetilde{a}_1,\ldots, \widetilde{a}_p) \) are corresponding integration weights. That is, $\widetilde{a}_i$ is the area corresponding to the $i$th node in the integration mesh. 
Let $0_{n}$ and $1_n$ denote row vectors of length $n$ with all elements equal to $0$ and $1$, respectively. Introducing the vectors
$y = (0_{p}, 1_{N})$,
$\alpha = (\tilde{\alpha}, 0_{N})$
and 
$$
\log \eta = (
m(\widetilde{s}_1) + u(\widetilde{s}_1), \ldots, 
m(\widetilde{s}_p) + u(\widetilde{s}_p), 
m(s_1) + u(s_1), \ldots, m(s_N) + u(s_N)), 
$$
we can write the likelihood as 
\[
\pi(y \mid \lambda) \approx \exp(|\Gamma|) \prod_{i=1}^{N+p} \eta_i^{y_i} \exp(-\alpha_i \eta_i),
\]
which resembles the likelihood of observing \(N + p\) conditionally independent Poisson random variables with means \(\alpha_i \eta_i\) and observed values \(y_i\). Further,  $\log\eta$ is a multivariate Gaussian variable, which means that this is a latent Gaussian model. Thus, we can easily implement this approximation in Bayesian inference software such as R-INLA, provided that we can also compute the mean and precision matrix of $\log\eta$.

The mean value vector of $\log\eta$ is
$(
m(\widetilde{s}_1), \ldots, m(\widetilde{s}_p), m(s_1), \ldots, m(s_N))$. To evaluate the precision matrix, first note that by \citet{GRAPH1}, we can add vertices of degree 2 to $\Gamma$ without changing the distribution of the process and  let $\bar{\Gamma}$ denote the graph where all observation and integration locations have been added as vertices, by  
subdividing edges at these points to obtain a graph with a vertex set 
$$
\bar{\mathcal{V}} = \mathcal{V}\cup\{\widetilde{s}_1, \ldots, \widetilde{s}_p, s_1,\ldots,s_N\}.
$$
We can now use the methods from \citet{GRAPH4} to compute the precision matrix of the process at the vertices of $\bar{\Gamma}$ exactly if \(\alpha \in \mathbb{N}\). The reason for this is that the Whittle--Mat\'ern fields have Markov properties in this case \citep{GRAPH9}. For example, for $\alpha = 1$, the precision matrix is 
\[
Q_{ij} = 2\,\kappa\,\tau^{2}
\begin{cases}
\displaystyle\sum_{e \in \mathcal{E}_{v_i}}\!\Bigl(\tfrac{1}{2}+\tfrac{e^{-2\kappa l_e}}{1 - e^{-2\kappa l_e}}\Bigr) \mathbb{I}(\bar{e}\neq \underline{e}) 
\;+\;\tanh\Bigl(\frac{\kappa l_e}{2}\Bigr)\mathbb{I}(\bar{e}=\underline{e})
& \text{if } i=j,\\
\displaystyle\sum_{e \in \mathcal{E}_{v_i}\cap\mathcal{E}_{v_j}}-\frac{e^{-\kappa l_e}}{1 - e^{-2\kappa l_e}}
& \text{if } i \neq j.
\end{cases}
\]

Writing this precision matrix as a block matrix with blocks corresponding to the original vertices and the added locations, 
\[
\mathbf{Q} = \begin{pmatrix}
\mathbf{Q}_{\mathrm{vv}} & \mathbf{Q}_{\mathrm{vs}} \\
\mathbf{Q}_{\mathrm{sv}} & \mathbf{Q}_{\mathrm{ss}}
\end{pmatrix}.
\]
we can obtain the desired precision matrix as 
$\mathbf{Q}_{\mathrm{ss}} - \mathbf{Q}_{\mathrm{sv}}\mathbf{Q}_{\mathrm{vv}}^{-1}\mathbf{Q}_{\mathrm{vs}}.$
For higher $\alpha\in\mathbb{N}$, the methodology is similar but the derivations are more involved, see \citet{GRAPH4} for details.

The proposed method for approximating the posterior focuses on approximating only the integral in \eqref{eq:lgcp_likelihood}, while the latent field $u$ is evaluated exactly. This approach applies to cases where $\tau$ is either constant or given by $\tau(s) = \sigma^{-1} \sigma_\kappa(s)$. 
Let us now turn to the theoretical analysis of this approximation. Let $\mu$ be the posterior measure of the latent field $u$ given the data $Y$, and $\mu_p$ be the posterior measure induced by doing a numerical approximation of the integral of the latent field, where here $p$ represents the number of integration points. The accuracy of the approximate posterior measure will be measured with respect to the Hellinger distance, which is defined for two measures $\mu$ and $\mu_p$, absolutely continuous with respect to a common reference measure $\mu_0$, as
$$
d_{\text{Hell}}(\mu, \mu_p) = \left[ \frac{1}{2} \int \left\{ \left(\frac{d\mu}{d\mu_0}\right)^{1/2} - \left(\frac{d\mu_p}{d\mu_0}\right)^{1/2} \right\}^2 d\mu_0 \right]^{1/2}.
$$

The Hellinger distance plays a central role in quantifying the error between measures. Additionally, as established in \cite{stuart2010inverse} and noted in \cite{LGCP3}, convergence in Hellinger distance implies convergence in the total variation metric. Furthermore, this distance relates to the convergence of functionals through the inequality:
$$
\left| \mathbb{E}_\mu\{f(u)\} - \mathbb{E}_{\mu'}\{f(u')\} \right| \leq 2 \left[ \mathbb{E}_\mu\{|f(u)|^2\} - \mathbb{E}_{\mu'}\{|f(u')|^2\} \right] d_{\text{Hell}}(\mu, \mu').
$$

The following theorem formalizes the convergence result for our approximation.

\begin{theorem}\label{thm:hellinger_convergence}
    Let $\tau$ be either constant or given by $\tau(s) = \sigma^{-1} \sigma_\kappa(s)$, and let $u$ be the solution to \eqref{eq:WM_graph}. Let $\mu$ be the posterior measure of the latent field given the data $Y$, and let $\mu_p$ be the posterior measure obtained by approximating the integral in \eqref{eq:lgcp_likelihood} using $p$ integration points. 
    If $\alpha = 1$, $d_{\text{Hell}}(\mu, \mu_p) = \mathcal{O}(p^{-\gamma})$ for any $0 < \gamma < \frac{1}{2}$. 
    For $\alpha = 2$, $d_{\text{Hell}}(\mu, \mu_p) = \mathcal{O}(p^{-1}).$ 
\end{theorem}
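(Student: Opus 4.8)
The plan is to cast the problem in the Bayesian inverse problems framework of \citet{stuart2010inverse}, exactly as \citet{LGCP3} do in the Euclidean SPDE setting, and to reduce the Hellinger bound to a moment estimate for the midpoint quadrature error of the exponentiated field. Writing $\mu_0$ for the prior law of the centered Whittle--Mat\'ern field $u$, both posteriors are absolutely continuous with respect to $\mu_0$, with densities proportional to $\exp(-\Phi(u))$ and $\exp(-\Phi_p(u))$, where
\begin{align*}
\Phi(u)&=\int_\Gamma e^{m(s)+u(s)}\,ds-\sum_{i=1}^N\bigl(m(s_i)+u(s_i)\bigr),\\
\Phi_p(u)&=\sum_{i=1}^p\widetilde a_i\, e^{m(\widetilde s_i)+u(\widetilde s_i)}-\sum_{i=1}^N\bigl(m(s_i)+u(s_i)\bigr),
\end{align*}
so that the difference $Q(u):=\Phi(u)-\Phi_p(u)$ is precisely the midpoint quadrature error for $\int_\Gamma e^{m+u}$. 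The first structural fact I would record is that the normalizing constants $Z=\mathbb{E}_{\mu_0}[e^{-\Phi}]$ and $Z_p=\mathbb{E}_{\mu_0}[e^{-\Phi_p}]$ are finite and bounded away from zero uniformly in $p$: since $\int_\Gamma e^{m+u}\ge 0$ and $\sum_i\widetilde a_i e^{m(\widetilde s_i)+u(\widetilde s_i)}\ge 0$, one has $e^{-\Phi}\vee e^{-\Phi_p}\le \exp\!\bigl(\sum_i(m(s_i)+u(s_i))\bigr)$, whose expectation is finite because $(u(s_1),\dots,u(s_N))$ is a finite-dimensional Gaussian vector; positivity of $Z,Z_p$ is immediate from a.s. positivity of the integrands.

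Next I would invoke the standard stability estimate (Stuart 2010), which in the present notation reads
\[
d_{\text{Hell}}(\mu,\mu_p)^2 \le C\,\mathbb{E}_{\mu_0}\!\left[\max\bigl(e^{-\Phi},e^{-\Phi_p}\bigr)\,|Q(u)|^2\right],
\]
and then apply Cauchy--Schwarz in $(\Omega,\mathcal{F},\mathbb{P})$ to bound the right-hand side by $\bigl(\mathbb{E}_{\mu_0}[\max(e^{-2\Phi},e^{-2\Phi_p})]\bigr)^{1/2}\bigl(\mathbb{E}_{\mu_0}[|Q(u)|^4]\bigr)^{1/2}$. The first factor is finite by the exponential-moment argument of the previous paragraph, so the entire problem collapses to controlling the fourth moment of the quadrature error $Q(u)$.

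For the deterministic part I would use edge-wise midpoint estimates on a quasi-uniform mesh of spacing $h\sim|\Gamma|/p\sim p^{-1}$ whose nodes are aligned with the vertices of $\bar\Gamma$ (so the kinks of $u$ at vertices fall on subinterval endpoints). A $\gamma$-Hölder integrand then satisfies $|Q(u)|\le C_\gamma\,[e^{m+u}]_{C^{0,\gamma}(\Gamma)}\,h^{\gamma}$, while an $H^1(\Gamma)$ integrand satisfies $|Q(u)|\le C\,\|(e^{m+u})'\|_{L_2(\Gamma)}\,h$ (the latter via $|f(s)-f(c)|\le|s-c|^{1/2}\|f'\|_{L_2}$ followed by a Cauchy--Schwarz summation over the subintervals). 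For $\alpha=1$ the regularity recorded earlier gives $u\in C^{0,\gamma}$ for every $\gamma<1/2$, and since $x\mapsto e^x$ is Lipschitz on the a.s.\ bounded range of $m+u$, the composition $e^{m+u}$ lies in the same Hölder class (assuming, as in practice, that $m$ is at least as regular as the sample paths of $u$); for $\alpha=2$ one has $u\in H^1(\Gamma)$ a.s., and because $H^1$ in one dimension is an algebra stable under composition with smooth functions, $e^{m+u}\in H^1(\Gamma)$. Both regularity statements carry over verbatim to the variance-stationary field $\hat u$, which has the same smoothness, so the covered case $\tau(s)=\sigma^{-1}\sigma_\kappa(s)$ is treated identically.

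The final, and hardest, step is the moment control of the random prefactors $[e^{m+u}]_{C^{0,\gamma}}$ and $\|(e^{m+u})'\|_{L_2}$, where the exponential nonlinearity is felt: each prefactor is dominated by a product of $\exp(\sup_\Gamma(m+u))$ with a seminorm of $u$ itself (its $\gamma$-Hölder seminorm, respectively $\|u'\|_{L_2}$). The supremum and these seminorms are measurable (semi)norms of a Gaussian process, so by Fernique's theorem (equivalently Borell--TIS) they have Gaussian tails and hence finite moments of every order, and $\mathbb{E}_{\mu_0}[\exp(q\sup_\Gamma(m+u))]<\infty$ for all $q$; a further Cauchy--Schwarz in $\Omega$ separates the exponential factor from the seminorm factor to give $\mathbb{E}_{\mu_0}[|Q(u)|^4]^{1/4}=\mathcal{O}(h^{\gamma})$ for $\alpha=1$ and $\mathcal{O}(h)$ for $\alpha=2$. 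Substituting $h\sim p^{-1}$ into the stability bound then yields $d_{\text{Hell}}(\mu,\mu_p)=\mathcal{O}(p^{-\gamma})$ for every $\gamma<1/2$ when $\alpha=1$, and $\mathcal{O}(p^{-1})$ when $\alpha=2$. I expect the main obstacle to be exactly this moment step: one must pair the mesh-independent modulus-of-continuity (respectively $H^1$) bound for $u$ with the exponential moments of $\sup_\Gamma(m+u)$ and verify that the resulting constant is uniform in $p$, so that neither the exponential nonlinearity nor the growth in the number of integration nodes contaminates the quadrature rate.
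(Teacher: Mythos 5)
Your proposal is correct in substance and follows the same overall strategy as the paper's proof: cast the problem in the Bayesian inverse-problem framework, reduce the Hellinger bound to control of the potential perturbation $\Phi-\Phi_p$, bound that perturbation pathwise by a per-edge quadrature estimate whose random prefactor is $\exp(\sup_\Gamma u)$ times a seminorm of $u$, and close with Gaussian exponential-moment bounds. The differences are in execution, and two are worth noting. First, the paper does not re-derive the stability machinery: it verifies Assumption~\ref{assump:assumptions_Cotter_Stuart} for $\Phi$ and $\Phi_p$ together with the weighted difference bound $|\Phi-\Phi_p|\leq K\exp(\varepsilon\|u\|_X^2)\,\psi(\|p\|)$ (obtained from $\|\exp(u)\|_X\leq K\exp(\varepsilon\|u\|_X^2)$ via Young's inequality), and then invokes Theorem~\ref{thm:cotter_stuart} of \citet{cotter2010approximation}, which encapsulates exactly the Fernique argument and normalizing-constant estimates you carry out by hand. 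Second, for $\alpha=2$ your key lemma is genuinely different: an $H^1$-based midpoint estimate with a Cauchy--Schwarz summation over subintervals, in place of the paper's edge-wise $C^1$ estimate via the mean value theorem (Lemma~\ref{lem:C1_num_int_bound}); yours needs only $u\in H^1(\Gamma)$ a.s., while the paper uses the stronger $C^{1,\gamma}$ edge-wise regularity from \citet{GRAPH11}, and you also keep a general regular mean $m$ where the paper sets $m=0$. Two points to tighten in a full write-up: (i) almost-sure positivity gives $Z_p>0$ for each $p$ but not the uniform-in-$p$ lower bound your stability constant needs; the fix is to note that on the prior ball $\{\|u\|_{C(\Gamma)}\leq 1\}$ one has $\Phi_p\leq e\sum_i\widetilde{a}_i+N$, which is bounded uniformly in $p$ since the weights sum to a constant multiple of $|\Gamma|$, whence $Z_p\geq\mu_0(\|u\|_{C(\Gamma)}\leq 1)e^{-C}>0$; (ii) applying Fernique to the H\"older seminorm and to $\|u'\|_{L_2}$ requires the measurable-seminorm form of the theorem (the H\"older space is not separable), which is precisely the technicality that citing \citet{cotter2010approximation} with $\mu_0(X)=1$ sidesteps.
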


The proof of this theorem, along with a more rigorous and precise statement, is deferred to Appendix \ref{app:proof_theorem}. Further, in Appendix \ref{app:proof_theorem} we prove convergence for a more general class of functions $\tau$ which may be of interest for models with non-stationary features that can be captured through a spatially varying $\tau$. For simplicity, the result is proved assuming $m=0$, as the most important object in the posterior measure is the latent field.

\section{Application to Road Accident Data}\label{sec:application}
\subsection{Introduction}
In this section, we illustrate the usage of the Whittle--Mat\'ern log-Gaussian Cox processes through an application to road accident data from Al-Ahsa, which is the largest governorate in Saudi Arabia's Eastern Province, with a population of over one million.
All analyses were conducted using R \citep{SOFTWARE1} and the MetricGraph, and R-INLA packages \citep{SOFTWARE3, SOFTWARE12} for modeling, while the packages ggplot2, sf, and excursions \citep{SOFTWARE5, SOFTWARE7, SOFTWARE8} supported visualization and exceedance probability calculations. 

\subsection{Data Description}
 The data was obtained from the Traffic Police Department in Dammam and was previously analysed by \citet{ACCIDENT1}. The dataset covers the period from October 2014 to August 2017 and initially included 3994 accident records. After data cleaning, 3563 records remained, consisting of eight key variables and covering the entire region along with highway connections to major cities. Accidents are classified into 46 causes, 19 types, and four severity levels, although only the two most severe categories are present in this subset. Precise geolocations, dates, and information on fatalities and serious injuries are also included, see \citet{ACCIDENT1}. In this study, we excluded variables related to accident type, severity, or cause, focusing solely on spatial modeling.

\begin{figure}[t]
    \centering
    \includegraphics[width=\textwidth]{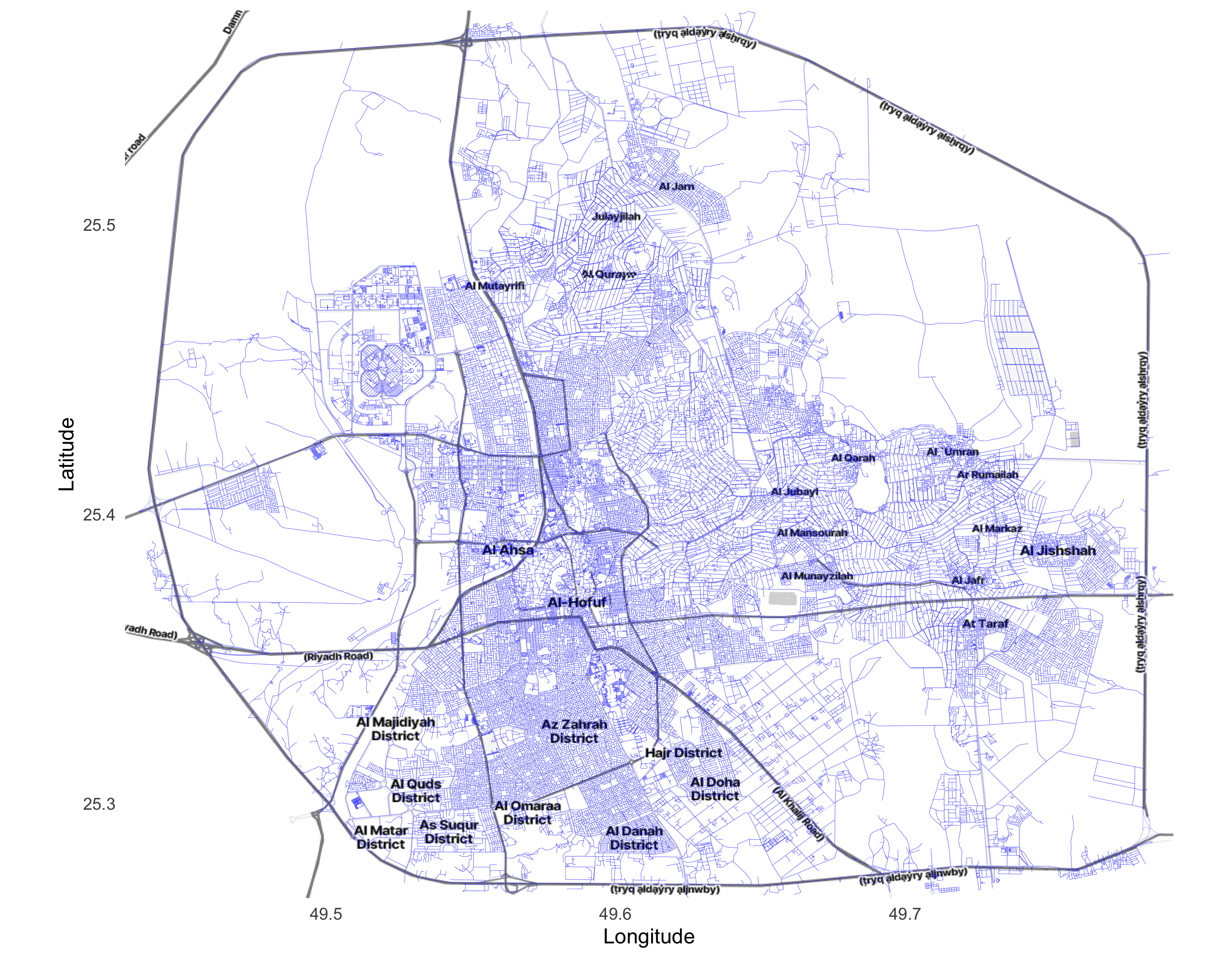}
    \caption{Street network representing the metric graph used for model fitting.}
    \label{fig:ahsa_graph}
\end{figure}

The road network was sourced from the TomTom Traffic Stats product \citep{ACCIDENT2}, providing additional data such as speed limits, road lengths, and functional road class. Functional road class ranks roads by importance, from 0 (major highways) to 7 (minor roads). For this study, we merged road classes 0 and 1 due to similar speed limits and the limited presence of Class 0 roads in the study region. To capture the full complexity of traffic dynamics and demonstrate the model’s scalability in handling extensive road networks, we included all road classes, although similar studies may focus on a subset of classes (1--6) for practicality.

Due to unclear village boundaries, we manually delineated the study area, balancing sufficient observations with manageable computational complexity. Therefore from the original dataset, we filtered 2482 accidents in the central part of Al-Ahsa over the three-year study period. The resulting area and network can be seen in Figure~\ref{fig:ahsa_graph}.

\subsection{Spatial Model Construction}

We fit two log-Gaussian Cox process models to analyze the spatial distribution of road accidents. The first, serving as a baseline, included only an intercept and a spatial random effect for the log intensity:
\begin{equation}
    \log(\lambda(s)) = \beta_0 + u(s), \label{eq:model1}
\end{equation}
where  \(\beta_0\) is the intercept, and \(u(s)\) is the spatial random effect modeled using a Whittle--Mat\'ern field with $\alpha=1$. The second model extended this formulation by incorporating spatial covariates for the log intensity:
\begin{equation}
\log(\lambda(s)) = \beta_0 + \sum_{i=1}^{12} \beta_i X_i(s) + u(s), \label{eq:model2}
\end{equation}
where \(X_i(s)\) represents the set of spatial covariates at location \(s\), and \(\beta_i\) are their associated coefficients.

Table~\ref{tab:model_covariates} describes all covariates used in the study. Data for these covariates were sourced from the osmdata R package \citep{SOFTWARE4}. The Euclidean distance from each mesh node to the nearest amenity was calculated. Although geodesic distances would likely improve accuracy, their computational demands were impractical given the large number of nodes in the graph. 

For the spatial random effect \(u(s)\), we used the MetricGraph package’s default priors for the Whittle--Mat\'ern field. Let $D_\Gamma$ be the length of the diagonal of the bounding box that contains $\Gamma$ as a planar object. The hyperparameters \(\kappa\) and \(\tau\) were both assigned log-Gaussian priors: \(\log(\kappa)\) was centred at \(\log(\kappa_0)\) with a precision of 10, where $\kappa_0$ is obtained based on the size of the spatial graph as $2/D_\Gamma$ and serves as starting value for $\kappa$ in the optimization, and \(\log(1/\tau)\) was centred at \(\log(2) - \log(\kappa_0)\), and assigned a precision of 10.  These defaults stabilize inference while allowing flexibility for the data to inform the spatial dependence structure.   

To ensure comparability between covariates, all variables were normalized using min-max scaling.  Categorical variables, such as road class, were encoded using one-hot encoding. The correlation matrices were analysed to detect multicollinearity; in cases of high correlation, variables were excluded to improve the robustness of the model. To emphasize the stronger influence of closer features, we applied a negative exponential transformation to distances to amenities, ensuring that closer amenities had a greater impact on the model. This transformation was applied consistently across all covariates. The traffic intensity was calculated following \citet{LGCP2} as: $\text{Traffic Intensity} = (\text{Speed Limit} / \text{Type of Road}) \times \log(\text{Road Length})$.

\begin{table}
    \centering
    \def~{\hphantom{0}} % Keeps alignment for numbers
    \caption{Definition of spatial covariates}
    \label{tab:model_covariates}
    \begin{tabular}{lll}
        \hline
        \textbf{Covariate} & \textbf{Source} & \textbf{Description} \\ 
        \hline
        Road length & TomTom & Length of the road segment (in meters) \\
        Functional road class (FRC) & TomTom & Road classification (1: high to 7: low) \\
        Speed limit & TomTom & Speed limit (in km/h) \\
        Traffic intensity & TomTom & Measure of traffic density \\
        Hospitals & OpenStreetMap & Number of healthcare facilities \\
        Education & OpenStreetMap & Number of educational institutions \\
        Finance & OpenStreetMap & Number of financial institutions \\
        Mosques & OpenStreetMap & Number of mosques \\
        Intersections & OpenStreetMap & Number of intersections \\
        \hline
    \end{tabular}
\end{table}

The graph, constructed using road segments from TomTom Traffic Stats, resulted in a metric graph with 81,438 vertices and 178,801 edges, providing comprehensive coverage of the study area’s road network (see Figure~\ref{fig:ahsa_graph}). Given the data’s high precision, we optimized the graph-building process in the MetricGraph package by setting \texttt{merge\_close\_vertices = FALSE}. This allowed the graph to be constructed in under 1.5 minutes on a machine with an Apple M1 Max chip and 32 GB of RAM. Further optimization was achieved by pruning vertices of degree 2, leveraging the fact that vertices of degree 2 can be removed without changing the model \citep{GRAPH1}. 
This step reduced the number of degree 2 vertices from 20,502 to 6,960. The final graph used for model fitting included 67,896 vertices and 165,259 edges. 

For the numerical approximation of the integral in the likelihood \eqref{eq:lgcp_likelihood}, a mesh with 100-meter spacing was used, resulting in 146,237 mesh nodes to ensure adequate spatial resolution.

\subsection{Results}

\begin{table}
    \centering
    \caption{Gaussian Process Parameters for log-Gaussian Cox process Models}
    \label{tab:gp_parameters}
    \def~{\hphantom{0}} % Ensures proper spacing for numbers
    \begin{tabular}{lcccc}
        \hline
        \textbf{Parameter} & \textbf{Model 1 Mean} & \textbf{Model 1 95\% CRI} & \textbf{Model 2 Mean} & \textbf{Model 2 95\% CRI} \\ 
        \hline
        $\kappa$ & 0.209  & [0.133, 0.302] &  0.189  & [0.117, 0.276] \\
        $\tau$ & 0.213   & [0.202, 0.226] & 0.216 & [0.204, 0.209] \\
        Intercept ($\beta_0$) &  -2.630 & [-3.032, -2.243] & -6.214 & [-8.333, -4.095]  \\
        \hline
    \end{tabular}
    \vspace{0.5em}
        \begin{minipage}{0.9\textwidth}  % Controls width for proper formatting
    \centering
        \footnotesize{CRI: credible interval. Parameters include $\kappa$, $\tau$ for the Whittle-Matérn field, and intercept $\beta_0$.}
    \end{minipage}
  
\end{table}
Both models were estimated using R-INLA. This required 55 seconds for Model 1 and 73 seconds for Model~2. 
The estimated field parameters and the estimated intercepts of the two models are shown in Table~\ref{tab:gp_parameters}.
Table~\ref{tab:model_results} summarizes the parameter estimates from Model 2. For each covariate, we present the estimated coefficients and their 95\% credible intervals. For example, we note that the coefficient for traffic intensity is positive, which is reasonable, as it indicates a higher risk of accidents on busy roads.

\begin{table}
    \centering
    \caption{Coefficient estimates for log-Gaussian Cox process Model 2}
    \label{tab:model_results}
    \def~{\hphantom{0}} % Ensures proper spacing for numbers
    \begin{tabular}{lccc}
        \hline
        \textbf{Covariate} & \textbf{Mean} & \textbf{Lower 95\% CRI} & \textbf{Upper 95\% CRI} \\ 
        \hline
        Traffic intensity         & 8.278  & 3.594  & 12.963 \\
        FRC1                      & -0.688 & -1.067 & -0.310 \\
        FRC2                      & -8.275 & -28.637 & 12.086 \\
        FRC3                      & -1.124 & -1.542 & -0.706 \\
        FRC4                      & -2.893 & -3.883 & -1.904 \\
        FRC5                      & -0.739 & -1.017 & -0.461 \\
        FRC6                      & -0.252 & -0.421 & -0.083 \\
        Distance to mosques       & -0.008 & -0.245 & 0.229 \\
        Distance to education     & -0.667 & -0.917 & -0.417 \\
        Distance to finance       & 1.084  & 0.854  & 1.314 \\
        Distance to hospitals     & -0.717 & -1.010 & -0.424 \\
        Distance to intersections & 0.115  & -0.119 & 0.348 \\
        \hline
    \end{tabular}
    \vspace{0.5em}
    \begin{minipage}{0.9\textwidth}  % Controls width for proper formatting
    \centering
        \footnotesize CRI: credible interval; FRC: functional road class (one-hot encoded).
    \end{minipage}
\end{table}

The posterior mean of the log intensities for the two models is shown in Figure~\ref{fig:model_results}, highlighting the high-risk regions identified by the models. We can note that these two surfaces are quite similar. Figure~\ref{fig:model2_u} shows the posterior mean of the Gaussian field $u$ for Model 2, indicating the spatial variation not captured by the covariates. We can note that the covariates fail to capture the higher risk of accidents in the central part of the region and along some of the main roads.

\begin{figure}[t]
    \centering
    \includegraphics[width=\textwidth]{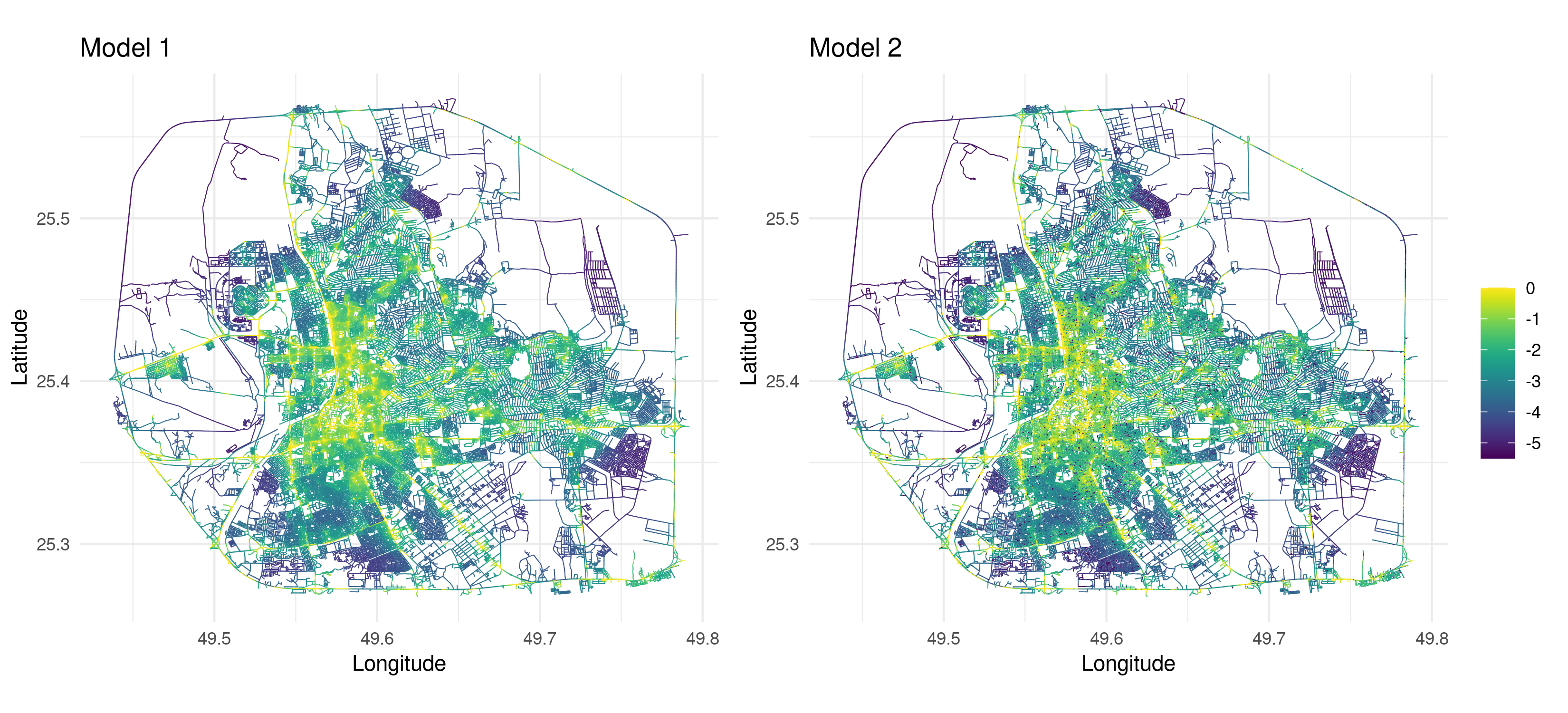}
    \caption{The estimated log-intensity $\log \lambda(s)$ for Model 1 (left) and Model 2 (right).}
    \label{fig:model_results}
\end{figure}

\begin{figure}[t]
    \centering
    \includegraphics[width=\textwidth]{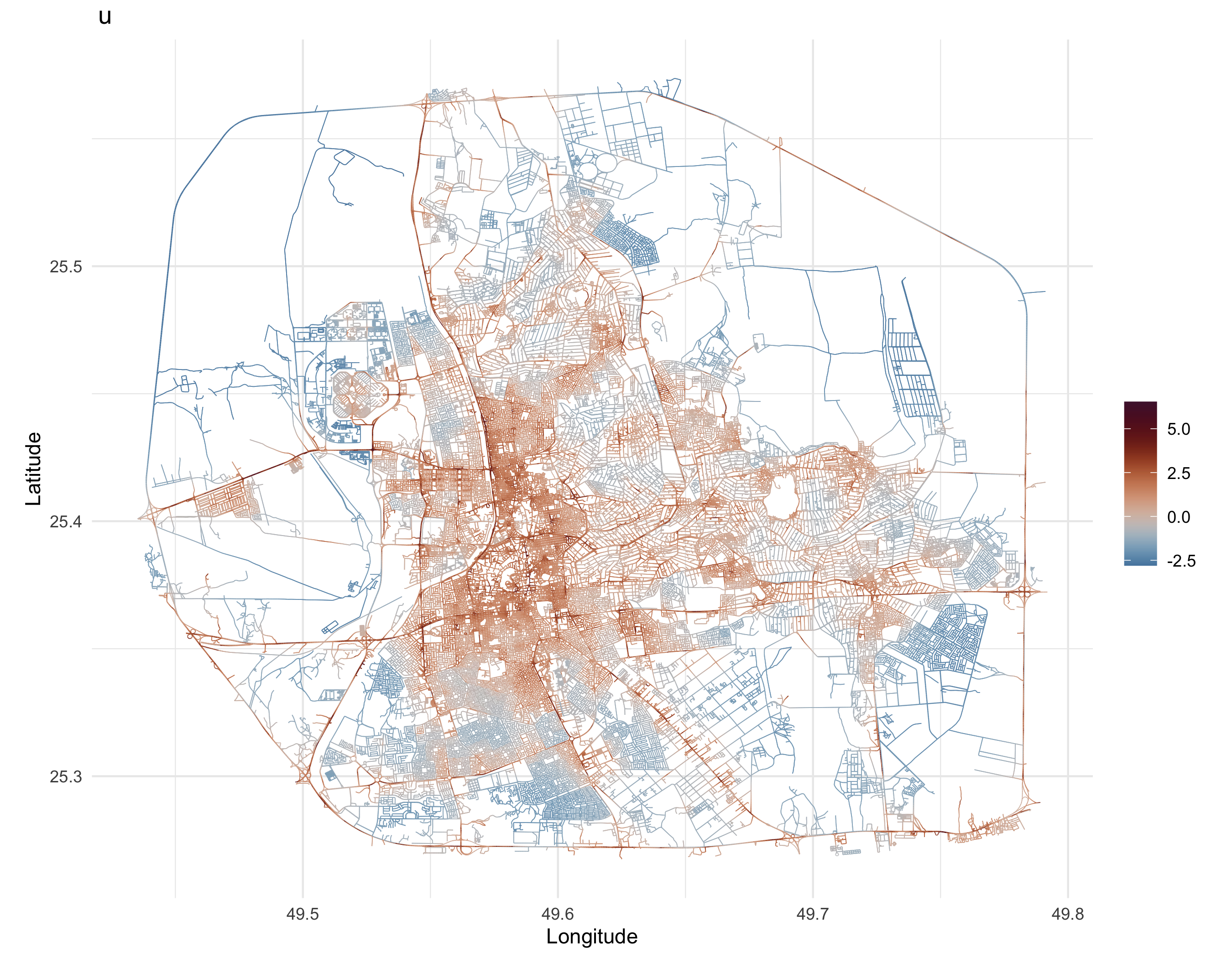}
    \caption{The posterior mean of $u$ for Model 2.}
    \label{fig:model2_u}
\end{figure}

\section{Excursion Set Analysis and Risk Localization}\label{sec:excursions}

Conventional applications of log-Gaussian Cox process models often conclude with the posterior mean $\mathbb{E}(u \mid Y)$ and variance, providing point estimates of the latent field. However, for many practical applications, the purpose is to find high risk areas, or so-called hotspots. These can be estimated by calculating excursion sets of the latent process, which we introduce for metric graph models in this section and then apply to the traffic accident analysis. For traffic applications, identifying these zones is critical for informing policymakers and practitioners as they develop targeted interventions to reduce road accidents and improve traffic safety.

A method for defining simultaneous credible bands for latent Gaussian models was introduced by \citet{RISK1} and implemented in the excursions R package \citep{SOFTWARE8}. Given that log-Gaussian Cox process models fall within the latent Gaussian model family, we can apply this method to identify high-risk areas within the road network. The integration of this approach with INLA model objects ensures computational efficiency, a significant advantage in large-scale applications like ours.

Our goal is to identify a region $D$ on the graph $\Gamma$ where the process $u(s)$ in \eqref{eq:WM_graph} exceeds a threshold $t$ with a probability of $1 - \alpha$ for all $s \in D \subset \Gamma$. This so-called positive excursion set indicates high-risk areas not captured by the covariates, where additional safety measures might be required.
If $u(s) = f(s)$ was a known function, we could compute the positive excursion set directly as $A_t^{+}(f) = \{ s \in \Gamma: f(s) > t \}$. However, because $u(s)$ is a latent random process, we focus on identifying regions where the process exceeds the threshold with high probability. Specifically, the positive excursion set with probability $\alpha$, $E_{t, \alpha}^{+}(u)$, is defined as
\begin{equation} \label{eq:excursion_set}
E_{t, \alpha}^{+} = \underset{D}{\arg \max} \left\{ |D| : \mathbb{P} \left[ D \subset A_t^{+}(u) \right] \geq 1 - \alpha \right\},
\end{equation}
The interpretation of equation~\eqref{eq:excursion_set} is that it represents the largest set where, with probability $1 - \alpha$, the process exceeds the threshold level $t$ at all locations within the set. 

In practice, the excursion sets can be computed efficiently for several values of $\alpha$ by parametrizing the possible excursion regions through excursion sets of the marginal excursion probabilities and then computing the excursion function 
$F_t(s) = \sup \{ 1 - \alpha : s \in E_{t, \alpha}^{+} \}$,
see \citet{RISK1} for details. The excursion function takes values in $[0,1]$ and an excursion set for a specific value of $\alpha$ is obtained as $A_{1-\alpha}^{+}(F_t)$.

To compute an excursion set, we must decide on the level $t$ to use. For Model 2 in the application, we set $t=0$ and compute the excursion function of the latent process $u$. Thus, this identifies areas where the Gaussian process has a significant contribution to increasing the accident risk. 

The result is shown in Figure~\ref{fig:excursion_results2}, where one can note that a few roads are highlighted as being unexpectedly dangerous taking the covariates into account.  Figure~\ref{fig:excursion_results} shows the marginal excursion probabilities computed for Model 2.

\begin{figure}[t]
    \centering
        \includegraphics[width=\textwidth]{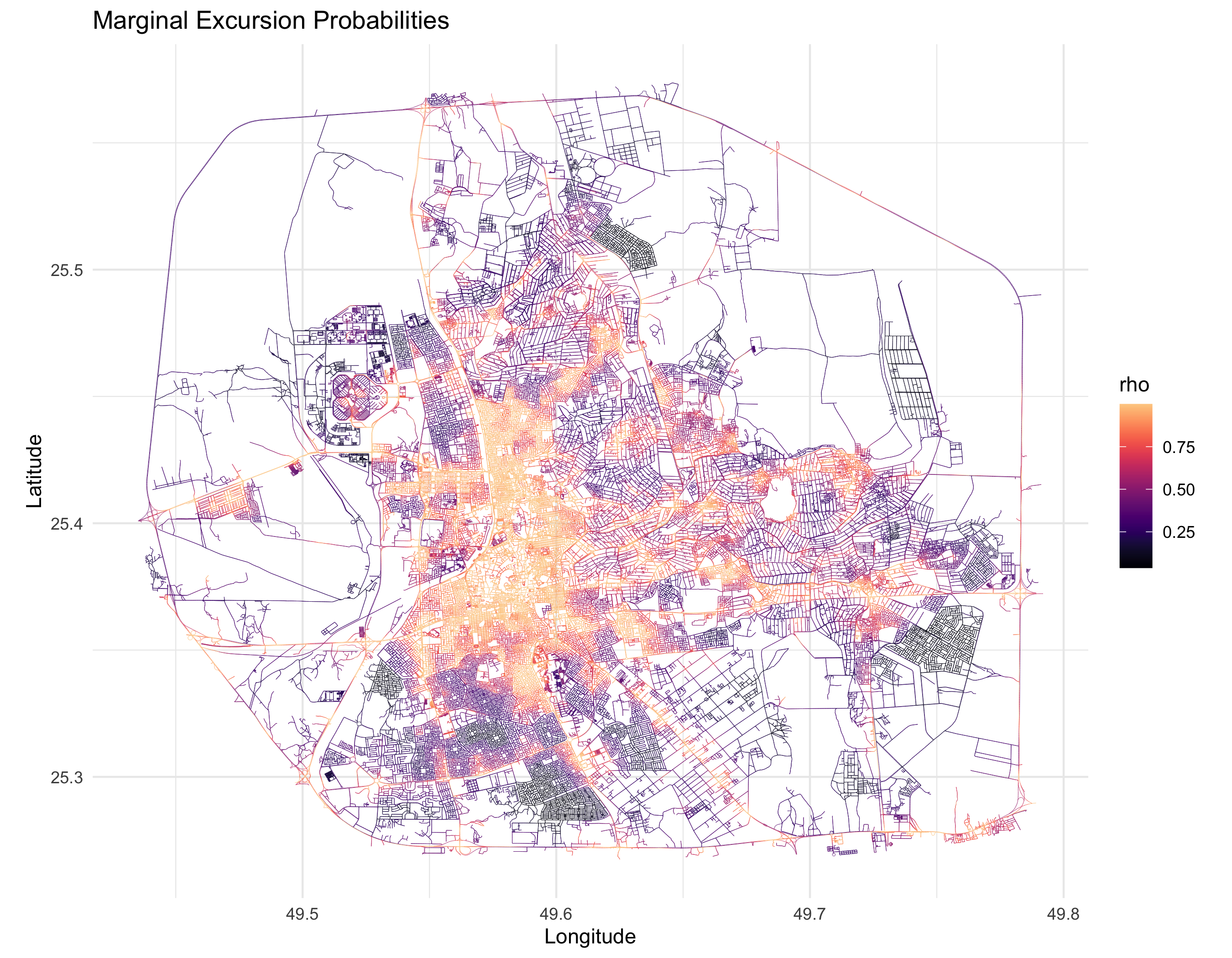}
    \caption{Marginal excursion probabilities for $u$ in Model 2.}
    \label{fig:excursion_results}
\end{figure}

\begin{figure}[t]
    \centering
        \includegraphics[width=0.9\textwidth]{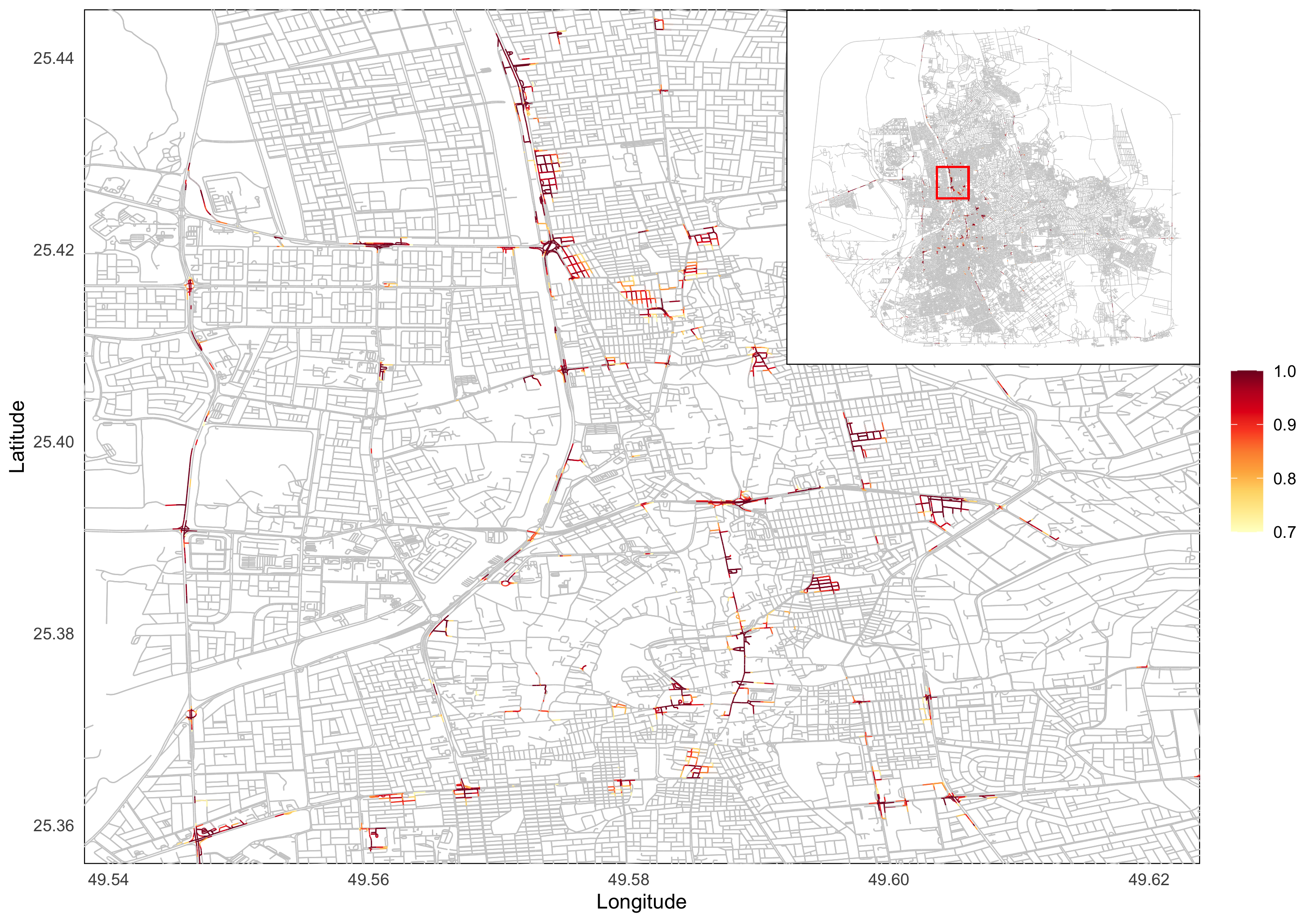}
    \caption{Excursion function for $u$ in Model 2.}
    \label{fig:excursion_results2}
\end{figure}

\section{Discussion and Future Work}\label{sec:discussion}

We have introduced the first class of log-Gaussian Cox processes which are well-defined for any metric graph and showed how to perform computationally efficient likelihood-based inference for these models. 
The models are based on the recently introduced Gaussian Whittle--Mat\'ern fields, and the key feature we used for efficient inference is that finite dimensional distributions of these fields can be evaluated exactly and correspond to Gaussian graphical models \citep{GRAPH13}. 
Furthermore, we have provided rigorous results regarding the rate of convergence of the approximated posterior distributions to the true posterior distribution with respect to the Hellinger distance. This is, to the best of our knowledge, the first complete proof of such convergence on log-Gaussian Cox processes with the Gaussian process being driven by SPDEs.

To showcase the practical utility of the model, we applied it to traffic accident data on a large scale network, performing hotspot localization to identify high-risk road segments by combining the log-Gaussian Cox process with exceedance probabilities. This methodology allows for precise identification of high-risk areas, as illustrated in Figure~\ref{fig:excursion_results2}. %Here one can clearly see the benefit of defining the model directly on the metric graph, instead of using a Gaussian field in Euclidean space, as roads that are nearby in Euclidean distance can have very different risks. 

The application also demonstrates the feasibility of implementing these models on extensive graphs that encapsulate the dynamics of an entire city or region within a single framework. 
While it is technically feasible to apply the model proposed in \citet{LGCP5} to this dataset, doing so would require linearizing the graph structure shown in Figure~\ref{fig:ahsa_graph}, which introduces computational challenges. For instance, the metric graph in our model consists of 165,259 edges, whereas a linearized version would require 532,122 lines to maintain equivalent network geometry. Further, as isotropic models cannot be Markov on general graphs such as those \citep{GRAPH9}, the resulting log-Gaussian Cox process would not have any sparsity to reduce the computational costs, which would make it computationally infeasible to fit on any standard desktop computer. This comparison underscores the computational efficiency of our approach.

Finally, while excursion sets have been extensively applied in environmental studies, such as in precipitation or temperature predictions in spatial contexts \citep{RISK2}, their application to log-Gaussian Cox process models remains relatively unexplored. To the best of our knowledge, this is the first use of excursion sets in a log-Gaussian Cox process context for road accident analysis. %These analyses are facilitated by the MetricGraph package and its interface to R-INLA and excursions packages. 

A natural question for future work is how to extend the models to accommodate general values of $\alpha$. A natural idea here is to use the methods of \citet{GRAPH11}, where a finite element approximation of the process is used in combination with a rational approximation of the fractional power. Through the methods in \citet{GRAPH11} we could also allow the parameters $\kappa$ and $\tau$ to be spatially varying functions to model nonstationary features. 
This would, however, require extending our theoretical results to account for the additional approximations of the Gaussian field. 
Another direction for future work is investigating the appropriateness of the Kirchhoff conditions at vertices, as defined in equation~\eqref{eq:Kirchoff}. These may not align perfectly with traffic dynamics, since the influx from smaller roads into larger networks results in a nonzero total flow. A possible solution is to adopt a ``weighted Kirchhoff" approach, assigning different weights to roads based on their importance. %While our current model treats all roads equally, a weighted version could enhance the dependence structure among roads, reflecting their hierarchical function within the network.
Future research will also explore space--time extensions of the model and multivariate extensions for multi-type point patterns.

%models that track dynamics over time, offering a more realistic perspective than our current assumption of year-by-year independence, which does not account for interventions or temporal changes. Investigating both separable and nonseparable models will be valuable. While extending to space--time modeling may be conceptually straightforward, achieving the same level of scalability on large graphs—as demonstrated in this paper—poses challenges due to the computational complexity increasing rapidly. For nonseparable space--time models, adapting frameworks such as those proposed in \citet{FUTURE2} may be necessary.

%Another avenue for extension is multitype point processes for multivariate modeling. Implementing a linear coregionalization model, as described in \citet{FUTURE3}, is a potential approach. However, ensuring sufficient data volume is crucial for multivariate LGCPs, as additional data are needed to effectively distinguish process variation from noise.

\section*{Acknowledgement}
We appreciate the data shared by the authors of \citet{ACCIDENT1}.

\appendix
\section{A review on approximation of posterior measures}\label{app:post_meas}

In this appendix, we will summarise some of the results found in \cite{cotter2010approximation}. We begin by outlining the basic framework. Let $\mu_0$ be a centred Gaussian measure, which we will refer to as the prior measure. Let $(X,\|\cdot\|_X)$ be a Banach space such that $\mu_0(X) = 1$. Further, assume that we are interested in a posterior measure $\mu$ that is absolutely continuous with respect to $\mu_0$ and has Radon--Nikodym derivative given by
$$\frac{d\mu}{d\mu_0}(u) = \frac{1}{Z(y)} \exp\{-\Phi(u;y)\},$$
where $y\in Y$ is some observed data, $(Y,\|\cdot\|_Y)$ is a Banach space, $\Phi(\cdot;\cdot)$ is called the potential, and is determined by the distribution of $y$ given $u$, and the normalization constant $Z(y)$ is given by
\begin{equation}\label{eq:norm_const}
    Z(y) = \int_X \exp(-\Phi(u, y)) \, d\mu_0(u).
\end{equation}

Now, we state the assumptions on the potential:

\begin{assumption}\label{assump:assumptions_Cotter_Stuart}
    For some Banach space $X$ with $\mu_0(X) = 1$, the function $\Phi: X \times Y \to \mathbb{R}$ satisfies the following:
    \begin{enumerate}
        \item[(i)] For every $\varepsilon > 0$ and $r > 0$, there exists $M = M(\varepsilon, r) \in \mathbb{R}$ such that for all $u \in X$ and $y \in Y$ with $\|y\|_Y < r$, 
        \[
        \Phi(u; y) \geq M - \varepsilon \|u\|_X^2.
        \]
        \item[(ii)] For every $r > 0$, there exists $L = L(r) > 0$ such that for all $u \in X$ and $y \in Y$ with $\max\{\|u\|_X, \|y\|_Y\} < r$, 
        \[
        \Phi(u; y) \leq L(r).
        \]
    \end{enumerate}
\end{assumption}

To approximate the measure $\mu$, we approximate the potential $\Phi$ within a $N$-dimensional subspace of $X$. Specifically, the approximate measure $\mu^N$ is defined as:
$$
\frac{d\mu^N}{d\mu_0}(u, y) = \frac{1}{Z^N(y)} \exp(-\Phi^N(u, y)),
$$
where the normalization constant $Z^N(y)$ is given by:
\begin{equation}\label{eq:norm_const_appr}
Z^N(y) = \int_X \exp(-\Phi^N(u, y)) \, d\mu_0(u).
\end{equation}

The potential $\Phi^N$ provides a simplified representation of the forward problem restricted to the $N$-dimensional subspace of $X$. The relationship between the approximation of $\Phi$ and the measures $\mu$ and $\mu^N$ has been rigorously studied in \cite{cotter2010approximation}, where it is shown that under appropriate assumptions, errors in $\Phi^N$ lead to controlled discrepancies between $\mu$ and $\mu^N$.

To measure the closeness of $\mu$ and $\mu^N$, we use the Hellinger distance. Under these tools, the following theorem is proved in \cite[Theorem 2.2]{cotter2010approximation}.
\begin{theorem}\label{thm:cotter_stuart}
    Assume that $\Phi$ and $\Phi^N$ satisfy Assumption \ref{assump:assumptions_Cotter_Stuart} with constants uniform in $N$. Further, assume that for any $\varepsilon > 0$, there exists $K = K(\varepsilon) > 0$ such that
    $$|\Phi(u;y) - \Phi^N(u;y)| \leq K \exp\left(\varepsilon \|u\|_X^2\right) \psi(N),$$
    where $\psi(N) \to 0$ as $N \to \infty$. Then, the measures $\mu$ and $\mu^N$ are close with respect to the Hellinger distance. Specifically, there exists a constant $C$, independent of $N$, such that
    $$d_{\text{Hell}}(\mu, \mu^N) \leq C \psi(N).$$
\end{theorem}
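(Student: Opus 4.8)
The plan is to reduce the claimed Hellinger bound to three ingredients: a uniform positive lower bound on the normalization constants $Z(y)$ and $Z^N(y)$, a $\psi(N)$-rate estimate on $|Z - Z^N|$, and a $\psi(N)$-rate control of the integrated squared difference of the unnormalized densities. The workhorse throughout will be the Fernique theorem, which guarantees that for the centred Gaussian prior $\mu_0$ there is a $\beta > 0$ with $\int_X \exp(\beta \|u\|_X^2)\,d\mu_0(u) < \infty$; I would fix the free parameter $\varepsilon$ appearing in Assumption~\ref{assump:assumptions_Cotter_Stuart} and in the approximation hypothesis to be small enough (relative to $\beta$) that every exponential-moment integral encountered below is finite. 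To get the lower bounds, I would restrict the integral \eqref{eq:norm_const} to a ball $\{\|u\|_X < r\}$ and use the upper bound (ii), giving $Z(y) \ge \exp(-L(r))\,\mu_0(\{\|u\|_X < r\}) > 0$; by the uniformity-in-$N$ hypothesis the identical argument bounds $Z^N$ below by the same constant, while the lower bound (i) combined with Fernique shows both are finite.

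Next I would bound $|Z - Z^N|$. Using the elementary inequality $|e^{-a} - e^{-b}| \le (e^{-a} + e^{-b})\,|a - b|$, the difference of integrands is dominated by $\max(e^{-\Phi}, e^{-\Phi^N})\,|\Phi - \Phi^N|$. Inserting the lower bound (i), which gives $e^{-\Phi} \le e^{-M} e^{\varepsilon \|u\|_X^2}$, together with the approximation hypothesis $|\Phi - \Phi^N| \le K e^{\varepsilon \|u\|_X^2}\psi(N)$, the integrand is controlled by a constant multiple of $\exp(2\varepsilon \|u\|_X^2)\,\psi(N)$. Integrating and invoking Fernique (with $2\varepsilon < \beta$) yields $|Z - Z^N| \le C\psi(N)$ with $C$ independent of $N$.

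Finally I would expand the squared Hellinger distance via the decomposition
$$
Z^{-1/2} e^{-\Phi/2} - (Z^N)^{-1/2} e^{-\Phi^N/2} = Z^{-1/2}\bigl(e^{-\Phi/2} - e^{-\Phi^N/2}\bigr) + \bigl(Z^{-1/2} - (Z^N)^{-1/2}\bigr) e^{-\Phi^N/2},
$$
and applying $(a + b)^2 \le 2a^2 + 2b^2$. The first resulting piece is handled exactly as above: $(e^{-\Phi/2} - e^{-\Phi^N/2})^2$ is bounded by a multiple of $\max(e^{-\Phi}, e^{-\Phi^N})\,|\Phi - \Phi^N|^2 \le C e^{3\varepsilon \|u\|_X^2}\psi(N)^2$, so Fernique (with $3\varepsilon < \beta$) and the lower bound on $Z$ turn it into $C\psi(N)^2$. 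The second piece uses $\int_X e^{-\Phi^N}\,d\mu_0 = Z^N < \infty$ from \eqref{eq:norm_const_appr} and the scalar estimate $|Z^{-1/2} - (Z^N)^{-1/2}| \le \tfrac{1}{2}\min(Z, Z^N)^{-3/2}\,|Z - Z^N|$, which by the previous step and the uniform lower bound on the normalization constants is again $O(\psi(N))$. Combining the two pieces gives $d_{\text{Hell}}(\mu, \mu^N)^2 \le C\psi(N)^2$, which is the asserted bound.

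I expect the main obstacle to be the careful bookkeeping of the exponential-moment integrability: one must choose $\varepsilon$ small enough that every factor of the form $e^{c\varepsilon \|u\|_X^2}$ encountered—arising jointly from the lower bound (i) on $\Phi$ and from the $e^{\varepsilon \|u\|_X^2}$ growth of the approximation error, sometimes squared—still lies below the Fernique threshold $\beta$, and to verify that all constants can be taken uniform in $N$ by exploiting the hypothesis that (i)--(ii) hold with $N$-independent constants. Once this is arranged, the remaining steps are only the elementary mean-value inequalities for the exponential and inverse-square-root functions.
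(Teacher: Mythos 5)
Your proof is correct, and it is essentially the argument behind this result: the paper itself does not prove the statement but quotes it from the cited reference \citep{cotter2010approximation}, and your reconstruction---uniform positive lower bounds on $Z$ and $Z^N$ from Assumption (ii) on a ball of positive $\mu_0$-measure, the mean-value estimate giving $|Z - Z^N| \leq C\psi(N)$, and the two-term decomposition of the Hellinger integrand with Fernique's theorem absorbing every factor $e^{c\varepsilon\|u\|_X^2}$ for $\varepsilon$ small---is precisely the standard proof given there. The only point worth stating explicitly is that $\mu_0(\{\|u\|_X < r\}) > 0$ for $r$ large enough (immediate from $\mu_0(X) = 1$), which your lower bound on the normalization constants tacitly uses.
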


This result implies that all moments of $\|u\|_X$ are $\mathcal{O}(\psi(N))$ close. In particular, when $X$ is a Hilbert space, the mean and covariance operator are also $\mathcal{O}(\psi(N))$ close.

\section{Auxiliary results}\label{app:aux_res}
In this appendix, we will prove auxiliary results needed for the application of Theorem \ref{thm:cotter_stuart}. More precisely, we will describe the Gaussian measure we will work with, along with the auxiliary Banach spaces and properties of the Gaussian measures.

First, let $\Gamma$ be a compact and connected metric graph, and recall the definition of $L_2(\Gamma)$ provided in the main text. We will also need to introduce some additional spaces. First, recall that $C(\Gamma)$ is the set of real-valued continuous functions in $\Gamma$, which is endowed with the norm $\|f\|_{C(\Gamma)} = \sup_{s\in\Gamma} |f(s)|$. Further, we need the space of $\gamma$-Hölder continuous functions, $ C^{0,\gamma}(\Gamma) $, for $ 0 < \gamma \leq 1 $, with seminorm 
$[f]_{C^{0,\gamma}(\Gamma)} = \sup_{s, s' \in \Gamma} \frac{|f(s) - f(s')|}{d(s, s')^\gamma},$ and norm $\|f\|_{C^{0,\gamma}(\Gamma)} = \|f\|_{C(\Gamma)} + [f]_{C^{0,\gamma}(\Gamma)}.$ We also need the following Banach space,
$$\widetilde{C}^1(\Gamma) = C(\Gamma) \cap \left(\bigoplus_{e\in\mathcal{E}} C^1(e)\right),$$
with its natural norm
$$\|f\|_{\widetilde{C}^1(\Gamma)} = \|f\|_{C(\Gamma)} + \sum_{e\in\mathcal{E}} \| f_e'\|_{C(e)}.$$

Now, we will show some numerical integration bounds for functions in these spaces. 

\begin{lemma}\label{lem:holder_num_int_bound}
Let $0<\gamma\leq 1$, and let $f\in C^{0,\gamma}(e)$, with $e=[0,l_e]$, $l_e>0$. Further, let $0\leq a \leq t^* \leq b \leq l_e$. Then,
$$\int_a^b |f(t) - f(t^*)| dt \leq \frac{2}{1+\gamma}\|f\|_{C^{0,\gamma}(e)} |b-a|^{1 + \gamma}.$$
\end{lemma}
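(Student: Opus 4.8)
The plan is to bound the integrand pointwise using the Hölder seminorm and then integrate the resulting bound over $[a,b]$, splitting the interval at the intermediate point $t^*$ to keep the powers of the distances under control. The key observation is that for any $t \in [a,b]$ we have, directly from the definition of the seminorm,
$$|f(t) - f(t^*)| \leq [f]_{C^{0,\gamma}(e)} |t - t^*|^\gamma \leq \|f\|_{C^{0,\gamma}(e)} |t - t^*|^\gamma,$$
since $[f]_{C^{0,\gamma}(e)} \leq \|f\|_{C^{0,\gamma}(e)}$. So the whole problem reduces to bounding $\int_a^b |t - t^*|^\gamma \, dt$.

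First I would split the integral as $\int_a^{t^*} (t^* - t)^\gamma\, dt + \int_{t^*}^b (t - t^*)^\gamma\, dt$, each of which is an elementary integral of a power function. Evaluating them gives $\frac{(t^* - a)^{1+\gamma}}{1+\gamma} + \frac{(b - t^*)^{1+\gamma}}{1+\gamma}$. Since $0 \leq a \leq t^* \leq b$, both $t^* - a$ and $b - t^*$ are at most $b - a = |b-a|$, so each term is bounded by $\frac{|b-a|^{1+\gamma}}{1+\gamma}$. Summing the two contributions yields the factor $\frac{2}{1+\gamma}|b-a|^{1+\gamma}$, and combining with the $\|f\|_{C^{0,\gamma}(e)}$ factor pulled out in the first step gives exactly the claimed bound.

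There is no real obstacle here; this is a short, self-contained computation. The only point that requires a moment of care is the interval splitting at $t^*$, which is what lets us replace $|t - t^*|$ by a quantity bounded by $|b-a|$ uniformly rather than incurring a worse constant. One could instead bound $|t - t^*| \leq |b-a|$ crudely on the whole interval and integrate the constant $|b-a|^\gamma$ to get $\|f\|_{C^{0,\gamma}(e)}|b-a|^{1+\gamma}$, which is an even better constant ($1$ versus $\frac{2}{1+\gamma}$ when $\gamma \in (0,1]$); the split version is used presumably to match a sharper form needed elsewhere, but either route establishes the stated inequality since $\frac{2}{1+\gamma} \geq 1$ for $\gamma \leq 1$. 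I would present the split computation as it most transparently tracks the role of the midpoint $t^*$, which is the quantity that will later be identified with a quadrature node in the application to the midpoint rule.
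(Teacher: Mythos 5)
Your proof is correct and follows essentially the same route as the paper's: bound the integrand by $\|f\|_{C^{0,\gamma}(e)}|t-t^*|^\gamma$, split the integral at $t^*$, evaluate the two power integrals, and bound each of $(t^*-a)^{1+\gamma}$ and $(b-t^*)^{1+\gamma}$ by $|b-a|^{1+\gamma}$. Your side remark that the cruder uniform bound $|t-t^*|\leq|b-a|$ yields the sharper constant $1 \leq \tfrac{2}{1+\gamma}$ is also correct, but the core argument matches the paper's exactly.
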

\begin{proof}
We have that
\begin{align*}
    \int_a^b |f(t) - f(t^*)| dt &\leq \|f\|_{C^{0,\gamma}(e)} \int_a^b |t - t^*|^\gamma dt = \|f\|_{C^{0,\gamma}(e)}\left( \int_a^{t^*} (t^* - t)^\gamma dt + \int_{t^*}^b (t - t^*)^\gamma dt\right)\\
    &=  \|f\|_{C^{0,\gamma}(e)} \left(\frac{(b - t^*)^{1 + \gamma}}{1 + \gamma} + \frac{(t^* - a)^{1 + \gamma}}{1 + \gamma}\right)   \leq \frac{2}{1+\gamma} \|f\|_{C^{0,\gamma}(e)} |b-a|^{1+\gamma}.
\end{align*}
This concludes the proof.
\end{proof}

By proving in a similar manner, but using the mean value theorem, we have that
\begin{lemma}\label{lem:C1_num_int_bound}
    Let $f\in C^1(e)$, with $e=[0,l_e]$, $l_e>0$. Further, let $0\leq a \leq t^* \leq b \leq l_e$. Then,
    $$\int_a^b |f(t) - f(t^*)| dt \leq \|f\|_{C^{1}(e)} |b-a|^{2}.$$
\end{lemma}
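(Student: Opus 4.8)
The plan is to follow the proof of Lemma~\ref{lem:holder_num_int_bound} essentially verbatim, replacing the Hölder seminorm estimate $|f(t)-f(t^*)|\le \|f\|_{C^{0,\gamma}(e)}|t-t^*|^\gamma$ by a first-order estimate coming from the mean value theorem. Concretely, for each fixed $t\in[a,b]$ the mean value theorem supplies a point $\xi$ lying between $t$ and $t^*$ with $f(t)-f(t^*)=f'(\xi)(t-t^*)$, and hence $|f(t)-f(t^*)|\le \|f'\|_{C(e)}\,|t-t^*|$. Since $\|f'\|_{C(e)}\le \|f\|_{C^{1}(e)}=\|f\|_{C(e)}+\|f'\|_{C(e)}$, it remains only to bound the elementary integral $\int_a^b |t-t^*|\,dt$.

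The second step is to split this integral at $t^*$, exactly as in the earlier lemma: $\int_a^b|t-t^*|\,dt=\int_a^{t^*}(t^*-t)\,dt+\int_{t^*}^b(t-t^*)\,dt=\tfrac12(t^*-a)^2+\tfrac12(b-t^*)^2$. Using $a\le t^*\le b$ to bound each of $(t^*-a)$ and $(b-t^*)$ by $(b-a)$, each term is at most $\tfrac12|b-a|^2$, so the sum is at most $|b-a|^2$. Combining this with the pointwise mean value estimate gives $\int_a^b|f(t)-f(t^*)|\,dt\le \|f\|_{C^{1}(e)}\,|b-a|^2$, which is precisely the claim. I would note in passing that this is the formal $\gamma=1$ instance of Lemma~\ref{lem:holder_num_int_bound}, whose constant $2/(1+\gamma)$ specializes to $1$, which is why the two proofs run in parallel.

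There is no genuine obstacle here: the argument reduces to a single application of the mean value theorem followed by an explicit one-variable integration. The only point requiring minor care is the bookkeeping of the constant—checking that the two boundary contributions, each bounded by $\tfrac12|b-a|^2$, sum to at most $|b-a|^2$, so that the clean constant $1$ stated in the lemma is attained. (In fact one even has the slightly sharper bound $\int_a^b|t-t^*|\,dt\le \tfrac12|b-a|^2$, maximized at the endpoints $t^*\in\{a,b\}$, which could be used if a tighter constant were desired, but the cruder estimate already suffices for the statement.)
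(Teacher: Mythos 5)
Your proof is correct and follows exactly the route the paper intends: the paper states that Lemma~\ref{lem:C1_num_int_bound} is proved ``in a similar manner'' to Lemma~\ref{lem:holder_num_int_bound} but using the mean value theorem, which is precisely your argument of bounding $|f(t)-f(t^*)|\le\|f'\|_{C(e)}|t-t^*|$ and then splitting the integral at $t^*$. Your side remarks (the sharper constant $\tfrac12$ and the identification with the formal $\gamma=1$ case of the H\"older lemma) are also accurate.
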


Now, let $u$ be the solution to \eqref{eq:WM_graph}. In order to be able to accommodate the variance-stationary model, we will allow $\tau$ to vary in space, and we will consider the following set of assumptions on $\tau$:

\begin{assumption}\label{assump:tau_assump}
    Let $1/2 < \alpha \leq 2$ and define $\tilde{\alpha} = \min\{\alpha - 1/2, 1/2\}$.     
    \begin{enumerate}
        \item[(i)] If $\alpha > 1/2$, then for some $\gamma \in (0, \tilde{\alpha})$, we have $\tau \in C^{0,\gamma}(\Gamma)$.
        
        \item[(ii)] If $\alpha > 3/2$, then for some $\gamma \in (0, \tilde{\alpha} - 1)$, $\tau_e \in C^{1,\gamma}(e)$ for all $e \in \mathcal{E}$, and $\tau$ satisfies the Kirchhoff conditions, that is, $\tau \in C(\Gamma)\cap K(\Gamma)$.
    \end{enumerate}
\end{assumption}

We now summarize some of the results related to solutions to \eqref{eq:WM_graph} with varying $\tau$. The proof of the following result can be found in \cite[Propositions 2 and 3]{GRAPH11}. 

\begin{proposition}
    \label{prop:regularity_results}
    Let Assumption \ref{assump:tau_assump} hold, and let $u$ be the solution to \eqref{eq:WM_graph}. Then, $u$ admits a modification $\tilde{u}$ such that, with probability 1:
    \begin{enumerate}
        \item[(i)] If $\alpha > 1/2$, $\tilde{u} \in C^{0,\gamma}(\Gamma)$ for the value of $\gamma$ specified in Assumption \ref{assump:tau_assump}.
        \item[(ii)] If $\alpha > 3/2$, $\tilde{u}_e \in C^{1,\gamma}(e)$ for every $e \in \mathcal{E}$, and $u \in C(\Gamma) \cap K(\Gamma)$, where $\gamma$ is as given in Assumption \ref{assump:tau_assump}.
    \end{enumerate}
\end{proposition}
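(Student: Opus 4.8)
The plan is to reduce the spatially varying-$\tau$ equation to the already-understood constant-coefficient case by the substitution $w = \tau u$, and then to transfer regularity back to $u = w/\tau$ using the multiplicative (algebra) structure of the Hölder spaces. Writing $w = \tau u$, equation \eqref{eq:WM_graph} reads $(\kappa^2 - \Delta_\Gamma)^{\alpha/2} w = \mathcal{W}$, so $w$ is exactly a standard Whittle--Mat\'ern field corresponding to $\tau \equiv 1$, and $u = \tau^{-1} w$. I would then invoke the sharp sample-path regularity of the standard field from \citet{GRAPH1}: for $\alpha > 1/2$, $w$ admits a modification $\tilde w \in C^{0,\gamma}(\Gamma)$ for every $\gamma < \tilde\alpha$, the cap at $1/2$ arising from the vertex behaviour of the covariance; and for $\alpha > 3/2$ the covariance is smooth on edge interiors, so a Kolmogorov-continuity argument applied to $w$ and its edgewise derivative yields $\tilde w_e \in C^{1,\gamma}(e)$ on each $e \in \mathcal{E}$ together with $w \in C(\Gamma)\cap K(\Gamma)$.

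The transfer step is where Assumption \ref{assump:tau_assump} enters. First I would record that, since $\tau \in C(\Gamma)$ is positive on the compact graph $\Gamma$, it attains a strictly positive minimum $\tau_{\min} > 0$, so $\tau^{-1}$ is well defined and bounded. I would then use the elementary facts that $C^{0,\gamma}(\Gamma)$ and $C^{1,\gamma}(e)$ are closed under pointwise products, and that reciprocation of a function bounded away from zero preserves them; for $C^{0,\gamma}$ this follows from $|\tau(s)^{-1} - \tau(s')^{-1}| \le \tau_{\min}^{-2}\,[\tau]_{C^{0,\gamma}(\Gamma)}\, d(s,s')^{\gamma}$, and the $C^{1,\gamma}$ case follows by the quotient rule. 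Under part (i) of the assumption, $\tau \in C^{0,\gamma}(\Gamma)$ with the same $\gamma < \tilde\alpha$, whence $u = \tilde w\,\tau^{-1} \in C^{0,\gamma}(\Gamma)$. Under part (ii), $\tau_e \in C^{1,\gamma}(e)$, so the same product/quotient calculus gives $\tilde u_e = \tilde w_e\,\tau_e^{-1} \in C^{1,\gamma}(e)$ on each edge.

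It remains to verify the vertex conditions in part (ii), and this is the step I expect to be the \emph{main obstacle}, since division by $\tau$ is a nonlinear operation that must be shown to respect the Kirchhoff constraint. Continuity of $u$ at a vertex $v$ is immediate from continuity and positivity of $w$ and $\tau$. For the Kirchhoff condition I would differentiate $u = w/\tau$ along each incident edge and evaluate the directional derivative at $v$,
$$
\partial_e u(v) = \frac{\partial_e w(v)}{\tau(v)} - \frac{w(v)\,\partial_e \tau(v)}{\tau(v)^2},
$$
where $w(v)$ and $\tau(v)$ are single-valued by continuity. Summing over $e \in \mathcal{E}_v$ gives
$$
\sum_{e\in\mathcal{E}_v} \partial_e u(v) = \frac{1}{\tau(v)}\sum_{e\in\mathcal{E}_v}\partial_e w(v) - \frac{w(v)}{\tau(v)^2}\sum_{e\in\mathcal{E}_v}\partial_e \tau(v),
$$
and both inner sums vanish because $w \in K(\Gamma)$ and, by assumption, $\tau \in K(\Gamma)$. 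Hence $u \in C(\Gamma)\cap K(\Gamma)$.

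The genuinely delicate points requiring care are, first, that the standard field $w$ indeed possesses the claimed edgewise $C^{1,\gamma}$ regularity and the Kirchhoff property $w \in K(\Gamma)$ for $\alpha > 3/2$ — this rests on the covariance/Green's-function analysis of \citet{GRAPH1}, where the loss of $\varepsilon$ in the regularity index is harmless because the Hölder ranges are open — and, second, that the one-sided directional derivatives $\partial_e w(v)$ and $\partial_e \tau(v)$ exist as genuine limits, so that the pointwise quotient-rule computation above is legitimate. Everything else reduces to the routine Hölder-algebra estimates sketched above, and the full argument is carried out in \citet{GRAPH11}.
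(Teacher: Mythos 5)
Your argument is correct and is in substance the proof the paper relies on: the paper does not prove Proposition \ref{prop:regularity_results} itself but defers entirely to \citet[Propositions 2 and 3]{GRAPH11}, and the argument there is precisely your reduction --- by the very definition of \eqref{eq:WM_graph}, $u = w/\tau$ where $w$ is the constant-coefficient ($\tau\equiv 1$) Whittle--Mat\'ern field, whose known sample-path regularity is transferred to $u$ via H\"older/$C^1$ algebra (legitimate since $\tau$ is positive and continuous, hence bounded below on the compact graph) and, for the vertex conditions, via the quotient-rule computation that uses continuity of $w$ and $\tau$ at vertices together with $w,\tau\in K(\Gamma)$. The only inputs you leave to the literature --- the edgewise $C^{1,\gamma}$ regularity and almost-sure Kirchhoff property of the constant-coefficient field for $\alpha>3/2$, which go beyond the H\"older and $H^1$ statements quoted in Section~\ref{sec:WM_graph} --- are exactly the constant-$\tau$ instance of the cited propositions, resting on the Green's-function analysis of \citet{GRAPH1}, so your sketch defers no more to external sources than the paper itself does.
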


The next result, proved in \cite[Proposition 4]{GRAPH11}, shows that by choosing $\tau(s) = \sigma^{-1}\sigma_\kappa(s)$ to create the variance-stationary Whittle--Mat\'ern field, the resulting $\tau$ satisfies Assumption \ref{assump:tau_assump}, and thus, the solutions have the regularity given in Proposition \ref{prop:regularity_results}.

\begin{proposition}
    \label{prop:tau_and_u}
    Let $1/2 < \alpha \leq 2$. Assume $\tau(s) = \sigma^{-1} \sigma_\kappa(s)$, where $\sigma_\kappa$ represents the marginal standard deviations of a Whittle--Mat\'ern field with $\tau = 1$ and $\sigma>0$ is constant. If $u$ is the solution to \eqref{eq:WM_graph}, then $u$ admits a modification $\tilde{u}$ such that:
    \begin{enumerate}
        \item[(i)] If $\alpha > 1/2$, then $\tau \in C^{0,\tilde{\alpha}}(\Gamma)$, where $\tilde{\alpha} = \min\{\alpha - 1/2, 1/2\}$, and $\tilde{u} \in C^{0,\gamma}(\Gamma)$ for every $0 < \gamma < \tilde{\alpha}$.
        \item[(ii)] If $\alpha > 3/2$, then $\tau_e \in C^{1,\alpha - 3/2}(e)$ for every $e \in \mathcal{E}$, and $\tau\in C(\Gamma)\cap K(\Gamma)$. Consequently, $u_e \in C^{1,\gamma}(e)$ for every $e \in \mathcal{E}$ and $0 < \gamma < \alpha - 3/2$, and $\tilde{u} \in C(\Gamma) \cap K(\Gamma)$.
    
    \end{enumerate}
\end{proposition}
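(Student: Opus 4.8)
The plan is to reduce both claims about $\tilde u$ to the regularity of the coefficient $\tau = \sigma^{-1}\sigma_\kappa$, since once $\tau$ is shown to satisfy Assumption~\ref{assump:tau_assump}, the conclusions follow verbatim from Proposition~\ref{prop:regularity_results}. Thus the entire argument rests on controlling $\sigma_\kappa$, which I would do through the variance function $v := v_{\kappa,1,\alpha} = \sigma_\kappa^2$ of the reference field $w$ solving \eqref{eq:WM_graph} with $\tau \equiv 1$. Because $v$ is continuous and strictly positive on the compact graph, it is bounded below by some $m>0$, and since $x\mapsto\sqrt{x}$ is smooth on $[m,\infty)$, the composition $\sigma_\kappa=\sqrt{v}$ inherits every Hölder and $C^1$ property of $v$; the same holds for $\tau=\sigma^{-1}\sigma_\kappa$. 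It therefore suffices to prove $v\in C^{0,\tilde\alpha}(\Gamma)$ for $\alpha>1/2$, and $v_e\in C^{1,\alpha-3/2}(e)$ on each edge together with $v\in C(\Gamma)\cap K(\Gamma)$ for $\alpha>3/2$. Note that the reference field $w$ has constant $\tau$, so its own regularity and, in particular, its almost sure Kirchhoff property are already supplied by Proposition~\ref{prop:regularity_results} (the trivial case of a constant coefficient); this also shows the argument is not circular.

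For part~(i) I would write $v(s)=\mathbb{E}[w(s)^2]$ and factor the increment as $v(s)-v(s')=\mathbb{E}[(w(s)-w(s'))(w(s)+w(s'))]$. Cauchy--Schwarz bounds $|v(s)-v(s')|$ by $\sqrt{\mathbb{E}[(w(s)-w(s'))^2]}\,\sqrt{\mathbb{E}[(w(s)+w(s'))^2]}$, where the second factor is uniformly bounded because $w$ has bounded variance on the compact graph. The decisive input is then the sharp incremental-variance bound $\mathbb{E}[(w(s)-w(s'))^2]\le C\,d(s,s')^{2\tilde\alpha}$, which I would extract from the spectral representation of the covariance operator $(\kappa^2-\Delta_\Gamma)^{-\alpha}$, that is, from the same resolvent/Green's-function estimates underlying the sample-path regularity recalled in Section~\ref{sec:WM_graph}. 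Substituting yields $|v(s)-v(s')|\le C'\,d(s,s')^{\tilde\alpha}$, so $v\in C^{0,\tilde\alpha}(\Gamma)$, hence $\tau\in C^{0,\tilde\alpha}(\Gamma)$, and Assumption~\ref{assump:tau_assump}(i) holds for every $\gamma<\tilde\alpha$; Proposition~\ref{prop:regularity_results}(i) then gives $\tilde u\in C^{0,\gamma}(\Gamma)$.

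For part~(ii), with $\alpha>3/2$ the field $w$ admits an edgewise $C^1$ modification, so I would differentiate the diagonal to obtain $v'(s)=2\,\mathbb{E}[w(s)w'(s)]$, after justifying differentiation under the expectation via the mean-square existence of $w'$. Writing $w(s)w'(s)-w(s')w'(s')=(w(s)-w(s'))w'(s)+w(s')(w'(s)-w'(s'))$ and applying Cauchy--Schwarz termwise, with the edgewise bounds $\mathbb{E}[(w(s)-w(s'))^2]\lesssim d(s,s')^2$ and the sharp estimate $\mathbb{E}[(w'(s)-w'(s'))^2]\lesssim d(s,s')^{2(\alpha-3/2)}$, produces $|v'(s)-v'(s')|\lesssim d(s,s')^{\alpha-3/2}$ on each edge, i.e. $v_e\in C^{1,\alpha-3/2}(e)$. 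Continuity of $\tau$ follows from part~(i), and for the vertex condition I would use that $w$ obeys $\sum_{e\in\mathcal{E}_v}\partial_e w(v)=0$ almost surely, whence $\sum_{e\in\mathcal{E}_v}\partial_e v(v)=2\,\mathbb{E}[w(v)\sum_{e\in\mathcal{E}_v}\partial_e w(v)]=0$; since $v(v)>0$ is single-valued at the vertex, $\sum_{e\in\mathcal{E}_v}\partial_e\sigma_\kappa(v)=(2\sqrt{v(v)})^{-1}\sum_{e\in\mathcal{E}_v}\partial_e v(v)=0$, so $\tau\in C(\Gamma)\cap K(\Gamma)$. Assumption~\ref{assump:tau_assump}(ii) then holds, and Proposition~\ref{prop:regularity_results}(ii) delivers the stated regularity of $\tilde u$.

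The main obstacle is establishing the two incremental estimates at the \emph{exact} endpoint exponents, $\mathbb{E}[(w(s)-w(s'))^2]\lesssim d(s,s')^{2\tilde\alpha}$ and $\mathbb{E}[(w'(s)-w'(s'))^2]\lesssim d(s,s')^{2(\alpha-3/2)}$: unlike the sample-path statements, which only require every smaller exponent, these must be sharp, and the difficulty is concentrated near vertices of degree different from two, where the graph geometry caps the global regularity at $\tilde\alpha=1/2$ and the covariance operator must be controlled uniformly. The differentiation-under-expectation step and the mean-square existence of $w'$ across vertices are comparatively routine, handled through the same spectral representation.
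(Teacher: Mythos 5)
First, note what you are being compared against: the paper itself contains no proof of this proposition --- it is imported verbatim from \citet[Proposition 4]{GRAPH11}, and the surrounding text is only a citation. Your reduction scaffolding is sound and is essentially how such a proof must be organized: reducing everything to the regularity of $v=\sigma_\kappa^2$, the Cauchy--Schwarz factorization $v(s)-v(s')=\mathbb{E}[(w(s)-w(s'))(w(s)+w(s'))]$, the square-root composition using a uniform positive lower bound on $v$ (which indeed holds, e.g.\ because the constant eigenfunction of $\Delta_\Gamma$ gives $v\ge \kappa^{-2\alpha}/|\Gamma|$), the observation that the reference field has constant $\tau$ so no circularity arises, and the vertex identity $\sum_{e\in\mathcal{E}_v}\partial_e v(v)=2\,\mathbb{E}\bigl[w(v)\sum_{e\in\mathcal{E}_v}\partial_e w(v)\bigr]=0$ followed by the chain rule for $\sqrt{v}$. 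These steps are all correct.

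The genuine gap is that both conclusions rest entirely on the two sharp endpoint second-moment bounds, $\mathbb{E}[(w(s)-w(s'))^2]\le C\,d(s,s')^{2\tilde\alpha}$ and $\mathbb{E}[(w'(s)-w'(s'))^2]\le C\,d(s,s')^{2(\alpha-3/2)}$, and you do not prove them; you only assert they can be ``extracted'' from the spectral representation. Nothing stated in the paper supplies them: Section~\ref{sec:WM_graph} records only sample-path H\"older continuity for every exponent \emph{strictly below} $\tilde\alpha$, which is a consequence of such second-moment bounds (via Kolmogorov--Chentsov) but does not imply them --- one cannot recover a sharp modulus of continuity for the covariance from an open range of path-H\"older exponents. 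Since the entire analytic difficulty of the proposition is concentrated in exactly these two inequalities (uniform control of the fractional resolvent kernel of $(\kappa^2-\Delta_\Gamma)^{-\alpha}$, including near vertices of degree $\neq 2$, which is where the cap $\tilde\alpha=\min\{\alpha-1/2,1/2\}$ originates), your proposal reduces the statement to an unproven claim of essentially the same depth; as written it is an outline rather than a proof, and would need to either establish these bounds or cite the intermediate covariance estimates of \citet{GRAPH1,GRAPH11} explicitly. A further minor point: $K(\Gamma)$ as defined in \eqref{eq:Kirchoff} requires $\widetilde{H}^2(\Gamma)$ edge regularity, which your $C^{1,\alpha-3/2}(e)$ bound for $\tau$ alone does not deliver, so the claim $\tau\in C(\Gamma)\cap K(\Gamma)$ needs an additional (easy but separate) argument.
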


From the above results, we can conclude that under Assumption \ref{assump:tau_assump}, the solution $u$ to \eqref{eq:WM_graph} admits a modification $\tilde{u}$ with the following properties. For $\alpha > 1/2$ and $0 < \gamma < \min\{\alpha - 1/2, 1/2\}$, we have 
$
\mathbb{P}(\tilde{u} \in C^{0,\gamma}(\Gamma)) = 1.
$
Furthermore, for \(\alpha > 3/2\), there exists a modification $\tilde{u}$ such that
$
\mathbb{P}(\tilde{u} \in \tilde{C}^1(\Gamma)) = 1.
$

It is worth noting that Assumption \ref{assump:tau_assump} is satisfied by both constant $\tau$ and $\tau$ derived from the variance-stationary model, ensuring the validity of these results for such choices of $\tau$.

We conclude this section by introducing the Gaussian measure we will be working with. More precisely, the prior Gaussian measure we will use, namely $\mu_0$, is the distribution of the modification $\tilde{u}$ of the solution $u$ to \eqref{eq:WM_graph} obtained in Propositions \ref{prop:regularity_results} and \ref{prop:tau_and_u}. Therefore, by if $\tau$ satisfies Assumption \ref{assump:tau_assump}, then
$$
\mu_0(C^{0,\gamma}(\Gamma)) = 1 \quad \text{for } \alpha > 1/2,
$$
where $\gamma$ is given in Assumption \ref{assump:tau_assump}, and
$$
\mu_0(\widetilde{C}^1(\Gamma)) = 1 \quad \text{for } \alpha > 3/2.
$$

\section{Proof of the main theorem}\label{app:proof_theorem}
We are now in a position to prove the convergence result. Throughout this section we will assume Assumption~\ref{assump:tau_assump} holds. Further, we will assume that the mean function $m(\cdot)$ is constant equal to zero. We start by recalling the definition of the Gaussian measure $\mu_0$ from Appendix \ref{app:aux_res}, as the distribution of $\tilde{u}$, which is the modification of the solution $u$ to \eqref{eq:WM_graph} such that for $\alpha > 1/2$, we have
$\mu_0(C^{0,\gamma}(\Gamma)) = 1,$
where $0 < \gamma < \min\{\alpha-1/2,1/2\}$ is given in Assumption \ref{assump:tau_assump}, and for $\alpha>3/2$, we have
$\mu_0(\widetilde{C}^1(\Gamma)) = 1.$
For simplicity, in the remaining of this section we will denote the modification $\tilde{u}$ by $u$. 

Next, we need to define the Banach space where our data will belong. To this end, we define the following Banach space:
$$Y = \mathbb{R}^{|\mathcal{E}|} \times \ell_\infty(\mathbb{N})^{|\mathcal{E}|},$$
where $\ell_\infty(\mathbb{N})$ is the Banach space of bounded sequences with the norm $\|x\|_{\ell_\infty(\mathbb{N})} = \sup_{i\in\mathbb{N}} |x_i|$, for a sequence $x = (x_i)_{i=1}^\infty$. 

Given a point pattern $P$, it is identified with the element $y = \left((n_e)_{e\in\mathcal{E}}, ((s_{i,e})_{i=1}^\infty)_{e\in\mathcal{E}}\right)$ of $Y$, where $n_e$ represents the number of points in edge $e$, and $s_{i,e}$ is the location of the $i$th point of $P$ in the edge $e$, and for $i>n_e$ we have $s_{i,e} = 0$. Further, $Y$ is endowed with its natural norm:
$$\|y\|_Y = \sum_{e\in\mathcal{E}} (|n_e| + \|s_e\|_{\ell_\infty(\mathbb{N})}),$$
where $s_e = (s_{i,e})_{i=1}^\infty$. 

For our log-Gaussian Cox process, we need the following potential:
$$\Phi(u;y) = \sum_{e\in\mathcal{E}} \left(\int_e \exp\{u_e(t)\}dt - \sum_{i=1}^{n_e} u(s_{i,e}) \right).$$
Then, in the same spirit as in \cite{LGCP3}, we have that the posterior measure $\mu$ for $u(\cdot)$ conditioned on $y$, can be defined as its Radon-Nikodym derivative 
\begin{equation}\label{eq:post_mu_expr}
    \frac{d\mu}{d\mu_0}(u) = Z(y)^{-1} \exp\{-\Phi(u;y)\}, 
\end{equation}
where $Z(y)$ is the normalizing constant given in \eqref{eq:norm_const}.

As discussed in Appendix \ref{app:post_meas}, approximating the posterior distribution reduces to approximating the potential $\Phi(\cdot; \cdot)$. We consider the following approximation:
$$
\Phi_p(u;y) = \sum_{e\in\mathcal{E}}\left(\sum_{i=1}^{p_e} w_{i,e} \exp\{u(\tilde{s}_{i,e})\} - \sum_{i=1}^{n_e} u(s_{i,e})\right),
$$
where $\tilde{s}_{i,e}$ are integration points and $w_{i,e}$ are weights, with $i=1,\ldots, p_e$, $p_e \in \mathbb{N}$, and $e\in\mathcal{E}$. The corresponding approximated posterior measure is given by its Radon-Nikodym derivative:
\begin{equation}\label{eq:post_mu_apprx_expr}
    \frac{d\mu_p}{d\mu_0}(u) = \frac{1}{Z_p(y)} \exp\{-\Phi_p(u;y)\},
\end{equation}
where $Z_p(y)$ is the normalizing constant defined in \eqref{eq:norm_const_appr}. 

%It is noteworthy that in our setup only the integral term is approximated. This is in contrast with the framework in \cite{LGCP3}, where both the integral and the solution require approximation due to the finite element approximation of their solution. The reason is that our setup benefits from having an exact solution, allowing us to focus solely on the integral. This enables a complete proof of the posterior approximation.

The main result of this paper is stated in the following theorem.

\begin{theorem}\label{thm:main_theorem}
Let $u$ be the solution to \eqref{eq:WM_graph}, where $\tau$ satisfies Assumption \ref{assump:tau_assump} and $\alpha = 1$ or $\alpha = 2$. For each edge $e$, consider a partition $q_{1,e}, \ldots, q_{p_e+1,e}$, where $p_e \in \mathbb{N}$ and there exists a constant $K$, independent of $p_e$, such that there exist $K_1,K_2>0$ such that
$$
K_1 p_e^{-1} \leq  |q_{i+1,e} - q_{i,e}| \leq K_2 p_e^{-1},
$$
for $i=1,\ldots, p_e$, with $\tilde{s}_{0,e} = 0$. Define $\|p\| = \min_{e\in\mathcal{E}} p_e$, and let $\mu$ and $\mu_p$ be the posterior measures given by \eqref{eq:post_mu_expr} and \eqref{eq:post_mu_apprx_expr}, respectively. Assume that $\tilde{s}_{i,e} \in [q_{i,e}, q_{i+1,e}]$ for $i = 1,\ldots, p_e$ and $e\in\mathcal{E}$. Further, let $w_i = | \tilde{s}_{i+1,e} - \tilde{s}_{i,e} |$. Then, there exists a constant $C$, independent of $p$, such that
$$
d_{\text{Hell}}(\mu, \mu_p) \leq C \psi(\|p\|),
$$
where 
$$
\psi(\|p\|) =
\begin{cases}
    \|p\|^{-\gamma}, & \text{if } \alpha = 1, \text{ where } \gamma \text{ is given in Assumption \ref{assump:tau_assump}}, \\
    \|p\|^{-1}, & \text{if } \alpha = 2.
\end{cases}
$$
\end{theorem}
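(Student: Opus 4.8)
The plan is to reduce everything to the abstract approximation result, Theorem~\ref{thm:cotter_stuart}, taking as reference Banach space $X = C^{0,\gamma}(\Gamma)$ when $\alpha = 1$ and $X = \widetilde{C}^1(\Gamma)$ when $\alpha = 2$. By Propositions~\ref{prop:regularity_results} and \ref{prop:tau_and_u}, the prior $\mu_0$ assigns full mass to this $X$, so the Cotter--Stuart framework applies. Three things must then be checked: (a) the potential $\Phi$ satisfies Assumption~\ref{assump:assumptions_Cotter_Stuart}; (b) the approximate potential $\Phi_p$ satisfies the same assumption with constants independent of $p$; and (c) the difference $|\Phi(u;y) - \Phi_p(u;y)|$ admits a bound of the form $K \exp(\varepsilon\|u\|_X^2)\,\psi(\|p\|)$ with the claimed $\psi$. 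Once these are in place, Theorem~\ref{thm:cotter_stuart} delivers $d_{\mathrm{Hell}}(\mu,\mu_p) \leq C\psi(\|p\|)$ at once.

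For (a) and (b), since the integral (resp.\ quadrature) terms in $\Phi$ and $\Phi_p$ are nonnegative, dropping them yields the lower bound $\Phi(u;y) \geq -\sum_{e}\sum_{i=1}^{n_e} u(s_{i,e}) \geq -\|y\|_Y\,\|u\|_{C(\Gamma)}$, and Young's inequality $r t \leq \varepsilon t^2 + r^2/(4\varepsilon)$ turns this into Assumption~\ref{assump:assumptions_Cotter_Stuart}(i) with $M = -r^2/(4\varepsilon)$, using $\|u\|_{C(\Gamma)} \leq \|u\|_X$. For the upper bound (ii) I would estimate $\int_e \exp\{u_e(t)\}\,dt \leq \ell_e\, e^{\|u\|_{C(\Gamma)}}$ and $\sum_i w_{i,e}\exp\{u(\tilde{s}_{i,e})\} \leq \bigl(\sum_i w_{i,e}\bigr)\,e^{\|u\|_{C(\Gamma)}}$, together with $-\sum_{e}\sum_i u(s_{i,e}) \leq \|y\|_Y\,\|u\|_{C(\Gamma)}$. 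Because the partition hypothesis forces $\sum_i w_{i,e}$ to be bounded by a multiple of $\ell_e$ uniformly in $p$, both $\Phi$ and $\Phi_p$ obey the same bound $L(r) = C|\Gamma| e^r + r^2$ on $\{\|u\|_X, \|y\|_Y < r\}$, giving the uniform-in-$p$ constants that Theorem~\ref{thm:cotter_stuart} requires.

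The heart of the argument, and the step I expect to be the main obstacle, is (c). The point-mass terms cancel, so the difference equals the quadrature error for $g_e := \exp\{u_e\}$, namely $\sum_e\bigl(\int_e g_e\,dt - \sum_{i=1}^{p_e} w_{i,e}\,g_e(\tilde{s}_{i,e})\bigr)$. On each subinterval $[q_{i,e},q_{i+1,e}]$ I would compare the exact contribution with $w_{i,e}g_e(\tilde{s}_{i,e})$ and apply Lemma~\ref{lem:holder_num_int_bound} (for $\alpha = 1$, using $g_e \in C^{0,\gamma}(e)$) or Lemma~\ref{lem:C1_num_int_bound} (for $\alpha = 2$, using $g_e \in C^1(e)$), with the node $\tilde{s}_{i,e}$ in the role of $t^\ast$. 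The bound $|q_{i+1,e}-q_{i,e}| \leq K_2 p_e^{-1}$ then gives a per-subinterval error of order $p_e^{-(1+\gamma)}$ (resp.\ $p_e^{-2}$); summing over the $p_e$ subintervals converts this to $p_e^{-\gamma}$ (resp.\ $p_e^{-1}$), and summing over the finitely many edges with $p_e \geq \|p\|$ produces exactly $\psi(\|p\|)$.

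The delicate part is that the constants from the lemmas are the Hölder/$C^1$ norms of $g_e = \exp\{u_e\}$, which must be absorbed into the $\exp(\varepsilon\|u\|_X^2)$ factor demanded by Theorem~\ref{thm:cotter_stuart}. Via the chain rule and the mean value theorem one gets $[g_e]_{C^{0,\gamma}(e)} \leq e^{\|u\|_{C(\Gamma)}}[u]_{C^{0,\gamma}(e)}$ and $\|g_e'\|_{C(e)} \leq e^{\|u\|_{C(\Gamma)}}\|u_e'\|_{C(e)}$, so in both cases the relevant norm of $g_e$ is bounded by $e^{\|u\|_X}(1 + \|u\|_X)$; since for every $\varepsilon > 0$ there is $C_\varepsilon$ with $e^{x}(1+x) \leq C_\varepsilon e^{\varepsilon x^2}$ for all $x \geq 0$, this is dominated by $C_\varepsilon \exp(\varepsilon\|u\|_X^2)$, precisely the permitted growth. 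A final bookkeeping point I would verify is the consistency of the weights with the subinterval lengths: writing the error as $\sum_i \int_{q_{i,e}}^{q_{i+1,e}}(g_e(t)-g_e(\tilde{s}_{i,e}))\,dt + \sum_i(|q_{i+1,e}-q_{i,e}|-w_{i,e})g_e(\tilde{s}_{i,e})$ shows the first sum is handled by the lemmas, while the second vanishes for the natural choice $w_{i,e}=|q_{i+1,e}-q_{i,e}|$ and is otherwise $\mathcal{O}(\|p\|^{-1})$ under the comparability of the partition, hence absorbed into $\psi(\|p\|)$. Collecting (a)--(c) and invoking Theorem~\ref{thm:cotter_stuart} completes the proof.
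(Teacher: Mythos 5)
Your proposal follows essentially the same route as the paper's proof: reduction to Theorem~\ref{thm:cotter_stuart} with $X = C^{0,\gamma}(\Gamma)$ (resp.\ $\widetilde{C}^1(\Gamma)$), verification of Assumption~\ref{assump:assumptions_Cotter_Stuart} via Young's inequality, quadrature error control through Lemmas~\ref{lem:holder_num_int_bound} and \ref{lem:C1_num_int_bound} applied to $\exp\{u_e\}$, and absorption of the resulting H\"older/$C^1$ norms of $\exp\{u\}$ into $K\exp(\varepsilon\|u\|_X^2)$. The only differences are cosmetic and, if anything, slightly more careful than the paper's write-up (the explicit $e^r$ in the upper bound of condition (ii), the mean-value-theorem bound in place of the cited composition result, and the bookkeeping of the weights $w_{i,e}$ versus the subinterval lengths, which the paper treats implicitly).
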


\begin{proof}
Our goal is to apply Theorem \ref{thm:cotter_stuart}. For this proof, for the case $\alpha = 1$, $\gamma$ will be given from Assumption~\ref{assump:tau_assump}. Observe that for $\alpha = 1$ we will work with the Banach space $X = C^{0,\gamma}(\Gamma)$, and for $\alpha = 2$ we will work with the Banach space $X = \widetilde{C}^1(\Gamma)$. Thus, $\mu_0(X) = 1$.

Next, we verify that both potentials $\Phi(\cdot;\cdot)$ and $\Phi_p(\cdot;\cdot)$ satisfy Assumption \ref{assump:assumptions_Cotter_Stuart}. We begin by proving condition (i) of Assumption \ref{assump:assumptions_Cotter_Stuart} for both potentials.

Let $r > 0$ and take any $y$ such that $\|y\|_Y < r$. This implies that $\sum_{e\in\mathcal{E}} n_e < r$. Using this, we obtain
\begin{align*}
    \Phi(u;y) &= \sum_{e\in\mathcal{E}} \left(\int_e \exp\{u_e(t)\}dt - \sum_{i=1}^{n_e} u(s_{i,e}) \right) \geq - \sum_{e\in\mathcal{E}} \sum_{i=1}^{n_e} u(s_{i,e}) \geq - \|u\|_{C(\Gamma)} \sum_{e\in\mathcal{E}} n_e\\
    &\geq - r \|u\|_{C(\Gamma)} \geq -\frac{r^2}{4\varepsilon} - \varepsilon \|u\|_{C(\Gamma)}^2,
    \end{align*}
where in the last inequality we applied Young's inequality with $\varepsilon > 0$, i.e., $ab \leq \varepsilon a^2 + b^2 / (4\varepsilon)$.

Now, note that for any $0 < \gamma \leq 1$, we have the bounds
$$
\|u\|_{C(\Gamma)} \leq \|u\|_{C^{0,\gamma}(\Gamma)}
\quad \text{and} \quad
\|u\|_{C(\Gamma)} \leq \|u\|_{\widetilde{C}^1(\Gamma)}.
$$
Let $M = M(\varepsilon,r) = -\frac{r^2}{4\varepsilon}$. Thus, for $\alpha = 1$, it follows that for every $\varepsilon > 0$,
$$
\Phi(u;y) \geq M - \varepsilon \|u\|_{C(\Gamma)}^2 \geq M - \varepsilon \|u\|_{C^{0,\gamma}(\Gamma)}^2,
$$
while for $\alpha = 2$, we obtain
$$
\Phi(u;y) \geq M - \varepsilon \|u\|_{C(\Gamma)}^2 \geq M - \varepsilon \|u\|_{\widetilde{C}^1(\Gamma)}^2.
$$
This proves condition (i). The same argument also applies to $\Phi_p(\cdot;\cdot)$. 

Let us now prove condition (ii) for both potentials. As for the first condition, the same proof works for both, and the constants will not depend on $p$. Thus, to avoid repetition, we will only prove for the potential $\Phi(\cdot;\cdot)$. Take any $r>0$ and let $\max\{\|u\|_{C^{0,\gamma}(\Gamma)},\|y\|_Y\} <r$ for $\alpha=1$, or  $\max\{\|u\|_{\widetilde{C}^1(\Gamma)},\|y\|_Y\} <r$. Observe that both cases imply
$$\|u\|_{C(\Gamma)} < r \quad\hbox{and}\quad \sum_{e\in\mathcal{E}} n_e < r.$$
Therefore,
\begin{align*}
    \Phi(u;y) &= \sum_{e\in\mathcal{E}} \left(\int_e \exp\{u_e(t)\}dt - \sum_{i=1}^{n_e} u(s_{i,e}) \right) \leq |\mathcal{E}| |\Gamma| \|u\|_{C(\Gamma)} + \|u\|_{C(\Gamma)} \sum_{e\in\mathcal{E}} n_e\\
    &\leq |\mathcal{E}| |\Gamma| r + r^2.
\end{align*}
This proves condition (ii). 

To conclude the proof, we need to obtain a bound for $|\Phi(u;y) - \Phi_p(u;y)|$. We start with the case $\alpha = 1$. 
First, let $f\in C^{0,\gamma}(\Gamma)$, and observe that by Lemma \ref{lem:holder_num_int_bound}, we obtain, by also using the assumption on the partition $\{q_{i,e}\}$, that
\begin{align*}
\sum_{e\in\mathcal{E}} \sum_{i=1}^{p_e} \int_{q_{i,e}}^{q_{i+1,e}} |f(t) - f(\tilde{s}_{i,e})| dt &\leq \frac{2}{1+\gamma} \|f\|_{C^{0,\gamma}(\Gamma)} \sum_{e\in\mathcal{E}} \sum_{i=1}^{p_e} |q_{i+1,e} - q_{i,e}|^{1+\gamma} \\
&\leq \frac{2}{1+\gamma} \|f\|_{C^{0,\gamma}(\Gamma)} \max_{i,e} |q_{i+1,e} - q_{i,e}|^{\gamma} \sum_{e\in\mathcal{E}} K_2 \sum_{i=1}^{p_e} p_e^{-1}\\
&\leq \frac{2 K_2|\mathcal{E}|}{1+\gamma} K_1^\gamma \|f\|_{C^{0,\gamma}(\Gamma)}\max_{e} p_e^{-\gamma} \leq C \|f\|_{C^{0,\gamma}(\Gamma)} \|p\|^{-\gamma},
\end{align*}
where the constant $C$ does not depend on $f$, nor on $p$. In a similar manner, for $\alpha = 2$ and using Lemma \ref{lem:C1_num_int_bound}, we obtain that there exists a constant $C$, not depending on $f$ or $p$, such that
$$\sum_{e\in\mathcal{E}} \sum_{i=1}^{p_e} \int_{q_{i,e}}^{q_{i+1,e}} |f(t) - f(\tilde{s}_{i,e})| dt \leq C \|f\|_{\widetilde{C}^1(\Gamma)} \|p\|^{-1}.$$

We can now move to bounding the difference of potentials. For $\alpha=1$,
\begin{align*}
|\Phi(u;y) - \Phi_p(u;y)| &\leq \sum_{e\in\mathcal{E}} \sum_{i=1}^{p_e} |\exp\{u_e(t)\} - \exp\{u_e(s_{i,e})\}| dt \leq C \|\exp\{u(\cdot)\}\|_{C^{0,\gamma}(\Gamma)} \|p\|^{-\gamma},
\end{align*}
and for $\alpha = 2$,
\begin{align*}
    |\Phi(u;y) - \Phi_p(u;y)| &\leq \sum_{e\in\mathcal{E}} \sum_{i=1}^{p_e} |\exp\{u_e(t)\} - \exp\{u_e(s_{i,e})\}| dt \leq C \|\exp\{u(\cdot)\}\|_{\widetilde{C}^1(\Gamma)} \|p\|^{-1}.
\end{align*}
By Theorem \ref{thm:cotter_stuart}, the result is proved if we show that for $\alpha=1$ for any $\varepsilon>0$, there exists $K = K(\varepsilon)>0$ such that
$$\|\exp\{u(\cdot)\}\|_{C^{0,\gamma}(\Gamma)} \leq K \exp\{\varepsilon \|u\|_{C^{0,\gamma}(\Gamma)}^2\},$$
and for $\alpha = 2$,
$$\|\exp\{u(\cdot)\}\|_{\widetilde{C}^1(\Gamma)} \leq K \exp\{\varepsilon \|u\|_{\widetilde{C}^1(\Gamma)}^2\}.$$

Let us first prove the case $\alpha = 1$. To this end, observe that $\exp(\cdot)$ is increasing, also that $x\leq \exp\{x\}$, and recall Young's inequality with $\varepsilon$, namely, $ab \leq \varepsilon a^2 + b^2/(4\varepsilon)$. Further, we also need \cite[Proposition 13]{GRAPH11} about composition of H\"older functions. Then,
\begin{align*}
    \|\exp\{u(\cdot)\}\|_{C^{0,\gamma}(\Gamma)} &\leq \|\exp\{u(\cdot)\}\|_{C(\Gamma)} + [\exp\{u(\cdot)\}]_{C^{0,\gamma}(\Gamma)}\\
    &\leq \exp\{\|u\|_{C(\Gamma)}\} + \exp\{\|u\|_{C(\Gamma)}\} \left(\|u\|_{C^{0,\gamma}(\Gamma)} \right)^\gamma\\
    &\leq \exp\{\|u\|_{C(\Gamma)}\} + \exp\{\|u\|_{C(\Gamma)} + \gamma \|u\|_{C^{0,\gamma}(\Gamma)}\}\\
    &\leq \exp\{\|u\|_{C^{0,\gamma}(\Gamma)}\} + \exp\{(1+\gamma)\|u\|_{C^{0,\gamma}(\Gamma)}\}\\
    &\leq 2 \exp\{(1+\gamma)\|u\|_{C^{0,\gamma}(\Gamma)}\}\leq K \exp\{\varepsilon \|u\|_{C^{0,\gamma}(\Gamma)}^2\},
\end{align*}
where $K = K(\varepsilon) = 2\exp\{(1+\gamma)^2/(4\varepsilon)\}$. This concludes the proof for $\alpha = 1$. Now, for $\alpha=2$ the proof is similar:
\begin{align*}
    \|\exp\{u(\cdot)\}\|_{\widetilde{C}^1(\Gamma)} &\leq \|\exp\{u(\cdot)\}\|_{C(\Gamma)} + \sum_{e\in\mathcal{E}} \|\exp\{u_e(\cdot)\}u_e'(\cdot)\|_{C(e)}\\
    &\leq \exp\{\|u\|_{C(\Gamma)}\} + \exp\{\|u\|_{C(\Gamma)}\} \sum_{e\in\mathcal{E}} \|u'\|_{C(e)}\\
    &\leq \exp\{\|u\|_{C(\Gamma)}\} + |\mathcal{E}| \exp\{\|u\|_{C(\Gamma)}\} \|u\|_{\widetilde{C}^1(\Gamma)}\\    
    &\leq \exp\{\|u\|_{C(\Gamma)}\} +  |\mathcal{E}| \exp\{\|u\|_{C(\Gamma)} + \|u\|_{\widetilde{C}^1(\Gamma)}\}\\
    &\leq \exp\{\|u\|_{\widetilde{C}^1(\Gamma)}\} + |\mathcal{E}|\exp\{2\|u\|_{\widetilde{C}^1(\Gamma)}\}\\
    &\leq (1+|\mathcal{E}|) \exp\{2\|u\|_{\widetilde{C}^1(\Gamma)}\}\leq K \exp\{\varepsilon \|u\|_{\widetilde{C}^1(\Gamma)}^2\},
\end{align*}
where, here, $K = K(\varepsilon) = (1+|\mathcal{E}|)\exp\{1/\varepsilon\}$. This concludes the proof for $\alpha = 2$.
\end{proof}

In view of Propositions \ref{prop:regularity_results} and \ref{prop:tau_and_u}, we obtain the following corollary:

\begin{corollary}
    \label{cor:convergence_tau_cases}
    Under the setting of Theorem \ref{thm:main_theorem}, suppose that \(\tau\) is either constant or given by \(\tau(s) = \sigma^{-1} \sigma_\kappa(s)\). Then, for \(\alpha = 1\), the convergence rate is given by
    $$
    d_{\text{Hell}}(\mu, \mu_p) \leq C \|p\|^{-\gamma}, \quad \text{for any } 0 < \gamma < \frac{1}{2}.
    $$
    For \(\alpha = 2\), we obtain the convergence rate
    $$
    d_{\text{Hell}}(\mu, \mu_p) \leq C \|p\|^{-1}.
    $$
\end{corollary}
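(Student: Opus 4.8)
The plan is to deduce the corollary directly from Theorem~\ref{thm:main_theorem}, whose conclusion already delivers the bound $d_{\text{Hell}}(\mu,\mu_p) \leq C\,\psi(\|p\|)$ the moment $\tau$ is known to satisfy Assumption~\ref{assump:tau_assump}. Hence the entire task reduces to verifying that each of the two prescribed choices of $\tau$ falls under that assumption, and then reading off the admissible H\"older exponent $\gamma$ in each smoothness regime. No new estimates are required; this is purely a matter of checking hypotheses and bookkeeping of exponents.

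First I would dispose of the constant case. A constant function has vanishing H\"older seminorm, so it lies in $C^{0,\gamma}(\Gamma)$ for every $\gamma \in (0,1]$; moreover its edge derivatives are identically zero, so it belongs to $\widetilde{C}^1(\Gamma)$ and trivially satisfies the Kirchhoff sum-to-zero condition, i.e. $\tau \in C(\Gamma)\cap K(\Gamma)$. Thus both part~(i), needed when $\alpha=1$, and part~(ii), needed when $\alpha=2$, of Assumption~\ref{assump:tau_assump} hold. For the variance-stationary choice $\tau(s)=\sigma^{-1}\sigma_\kappa(s)$, I would simply invoke Proposition~\ref{prop:tau_and_u}, which asserts precisely that this $\tau$ meets Assumption~\ref{assump:tau_assump}: one has $\tau\in C^{0,\tilde{\alpha}}(\Gamma)$ with $\tilde{\alpha}=\min\{\alpha-1/2,1/2\}$ when $\alpha>1/2$, and $\tau_e\in C^{1,\alpha-3/2}(e)$ together with the Kirchhoff conditions when $\alpha>3/2$. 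Proposition~\ref{prop:regularity_results} then records the attendant sample-path regularity of $u$, guaranteeing that $\mu_0$ charges the space $X$ on which the theorem is set. Either way, the hypotheses of Theorem~\ref{thm:main_theorem} are in force.

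Finally I would read off the rates. For $\alpha=1$ we are in the H\"older regime $X=C^{0,\gamma}(\Gamma)$ with $\tilde{\alpha}=\min\{1/2,1/2\}=1/2$, so Assumption~\ref{assump:tau_assump}(i) can be met with any exponent $\gamma\in(0,1/2)$; applying the theorem with that choice gives $\psi(\|p\|)=\|p\|^{-\gamma}$, which is exactly the claimed rate valid for any $0<\gamma<1/2$. For $\alpha=2>3/2$ we are instead in the regime $X=\widetilde{C}^1(\Gamma)$, where Theorem~\ref{thm:main_theorem} yields $\psi(\|p\|)=\|p\|^{-1}$ outright, with no dependence on a H\"older exponent. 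There is essentially no obstacle in this argument; the only point meriting a word of care is that the ``for any $\gamma$'' claim in the $\alpha=1$ case is to be understood pointwise in $\gamma$, in the sense that for each fixed $\gamma\in(0,1/2)$ the theorem furnishes a possibly $\gamma$-dependent constant $C$, rather than as a single bound uniform over all such $\gamma$.
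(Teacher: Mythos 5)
Your proposal is correct and follows essentially the same route as the paper: the paper likewise obtains the corollary by noting (via Propositions~\ref{prop:regularity_results} and~\ref{prop:tau_and_u}, and the remark that constant $\tau$ trivially satisfies Assumption~\ref{assump:tau_assump}) that both choices of $\tau$ fall under the hypotheses of Theorem~\ref{thm:main_theorem}, and then reading off $\psi(\|p\|)$ with $\gamma\in(0,1/2)$ for $\alpha=1$ and rate $\|p\|^{-1}$ for $\alpha=2$. Your explicit verification of the constant case and the remark about the $\gamma$-dependence of the constant $C$ are correct refinements of what the paper leaves implicit.
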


\begin{proof}[Proof of Theorem \ref{thm:hellinger_convergence}]
    This follows directly from Corollary \ref{cor:convergence_tau_cases}.
\end{proof}

\bibliographystyle{abbrvnat}
\bibliography{references}

\end{document}